\title{Unambiguous Tree Languages Are Topologically Harder Than Deterministic Ones}
\author{Szczepan Hummel\thanks{This research has been partially supported by the Polish MNiSW grant N N206 567840.}
\institute{Institute of Informatics, University of Warsaw}
}
\date{September 2011}
\newcommand{\add}[2]{{\Sigma}^{#1}_{#2}} 	
\newcommand{\mult}[2]{{\Pi}^{#1}_{#2}} 	
\newcommand{\amb}[2]{{\Delta}^{#1}_{#2}} 	
\newcommand{\trees}[1]{T_{#1}}
\DeclareMathOperator{\Borel}{Bor}
\newcommand{\compl}[1]{\overline{#1}}
\newcommand{\aut}[1]{\mathcal{#1}}
\newcommand{\autForLang}[1]{\aut{#1}}	
\newcommand{\autStartsWith}[2]{{#1}_{#2}}	
\newcommand{\trans}[4]{\ensuremath{\xymatrix@=0.2cm{ & {#1} & \\ & *+[o][F-]{#2} \ar@{-}[u] \ar@{-}[dl] \ar@{-}[dr] & \\ {#3} & & {#4} }}}
\newtheorem{theorem}{Theorem}[section]
\newtheorem{lemma}[theorem]{Lemma}
\newtheorem{prop}[theorem]{Proposition}
\newtheorem{remark}[theorem]{Remark}
\newtheorem{corollary}[theorem]{Corollary}
\newenvironment{proof}[1][]{\noindent \textbf{Proof\ifthenelse{\equal{}{#1}}{}{\space}#1:}\par}{\hfill$\square$\bigskip}
\newcommand{\namedpar}[1]{\vspace{0.1cm}\noindent\textbf{#1}}
\newcommand{\Nat}{\omega}
\DeclareMathOperator{\rank}{rank}
\newcommand{\playerA}{\forall}
\newcommand{\playerE}{\exists}
\newcommand{\dual}[1]{\overline{#1}}
\newcommand{\setInShaped}{\ensuremath{W}}
\begin{document}

\maketitle

\begin{abstract}
The paper gives an example of a tree language $G$ that is recognised by an unambiguous parity automaton 
and is $\add 11$-complete (analytic-complete) as a set in Cantor space. 
This already shows that the unambiguous languages are topologically more complex than the deterministic ones,
that are all in $\mult 11$.

Using set $G$ as a building block we construct an unambiguous language that is 
topologically harder than any countable boolean combination of $\add 11$ and $\mult 11$ sets.
In particular the language is harder than any set in difference hierarchy of analytic sets considered by O.~Finkel and P.~Simonnet in the context of nondeterministic automata.


\end{abstract}


\section*{Introduction}
Topological complexity becomes more and more popular as a set complexity measure 
in theoretical computer science, especially in automata theory.
Understanding how hard from the topological point of view are the languages 
recognised by a particular class of automata gives us more understanding 
of the power of those automata. 
It also gives us the access to very powerful and well developed tools coming from descriptive set theory.

One of remarkable uses of descriptive methods is the separation of classes of languages.
Once we know the upper complexity bound for a given class, we can use it while showing that some languages do not belong to the class.
To quote only recent applications of this method, it was used in \cite{max_journal} to show that deterministic max-automata are less expressive than nondeterministic ones.
In \cite{msou_topol_journal}, topological complexity methods were used to exclude a large class of automata as a potential automata model for MSO{+}U logic.

In this paper we address this complexity question for unambiguous 
automata on infinite binary trees, 
i.e. nondeterministic automata that have at most one accepting run on each tree.
With a rise and development of models of automata that do not admit determinisation (infinite tree automata, register automata, BC-automata \cite{bounds}, etc.), the notions like unambiguity, and strong unambiguity, that can be seen as less restrictive variants of determinism, gain the importance. 
For the survey on forms of determinism see \cite{colc_determinism}.

It is well known that deterministic parity tree automata 
recognise only sets in $\mult 11$ topological class (coanalytic sets). On the other hand,
nondeterministic automata recognise some sets that are neither analytic, 
nor coanalytic, but their expressive power is bounded by the second level of 
the projective hierarchy. By Rabin's complementation result (\cite{rabin69}), all 
nondeterministic languages are in $\amb 12$ class, 
i.e. are both $\add 12$ and $\mult 12$.

It was shown by Niwi{\'n}ski and Walukiewicz in \cite{unambiguous_unpublished} (later described in \cite{choice_and_order} and \cite{choice_functions}) that unambiguous automata
do not recognise all nondeterministic languages.
On the other hand, it is not hard to see that they are more expressive than deterministic automata.
An example here might be language $UB$ of trees that have exactly one branch 
with infinitely many labels \emph{a}.

Language $UB$ is a $\mult 11$ complete set. 
However, deterministic automata are also capable of recognising some $\mult 11$ complete sets.
Until now it was not known whether unambiguity introduces any hardness versus determinism from 
the topological viewpoint.
Because $UB$ seems to be typical and close to the very definition of unambiguous automata, it has been widely believed that this example reflects the maximum power of unambiguity.

In this work, in Section \ref{sec:G}, we show an example of unambiguous language $G$ that is $\add 11$ complete, 
hence is not in $\mult 11$. Then, in Section \ref{sec:beyondBC}, using $G$, we construct another unambiguous language
that is topologically harder than any set obtained from $\add 11$ and $\mult 11$ sets by countable boolean operations.
As a consequence, the language itself is not such a boolean combination.
At this level of granularity, this is the most that we can have in locating unambiguous class between deterministic 
and nondeterministic classes. 
To find out if there is a difference in topological complexity between nondeterministic 
and unambiguous languages we would have to refer to more precise complexity measures 
than just the projective hierarchy, e.g. to the finest one - the Wadge hierarchy.

In Section \ref{sec:StronglyUnamb} we show that the example from Section \ref{sec:beyondBC} is actually strongly unambiguous,
i.e. unambiguous together with its complement.

The full version of the paper with Appendix containing a proof of one general topological fact can be found online at:
\url{http://www.mimuw.edu.pl/~shummel/unamb\_topol.pdf}.

\section{Preliminaries}
Let $A$ be an arbitrary set of labels. An infinite binary tree over $A$ is a function $t:\{l,r\}^\ast\to A$.
By $\trees A$ we denote the set of all infinite binary trees over $A$. 
For $t\in\trees A$ and $v\in\{l,r\}^\ast$, we use a notation $t_v$ for the subtree of $t$ rooted in $v$, i.e. $t_v(w)=t(v\cdot w)$.

\subsection{Automata}
A (nondeterministic) \emph{parity tree automaton} over an alphabet $A$ consists of a finite set $Q$ of states, 
a transition relation $\delta\subseteq Q\times A\times Q\times Q$, an initial state $q_0\in Q$, 
and a ranking function $\rank:Q\to\Nat$. We depict a transition $(q,a,q_1,q_2)$ as:
$$\trans{q}{a}{q_1}{q_2}$$
A run $\rho$ of the automaton on a tree $t$ is a labelling of the tree with states ($\rho:\{l,r\}^\ast\to Q$) consistent with the transition relation, 
i.e. for each node $v\in\{l,r\}^\ast$ the tuple $(\rho(v),t(v),\rho(vl),\rho(vr))$ belongs to $\delta$.
We additionally require that the root of $t$ is labelled with an initial state $q_0$.
The run is accepting if on each branch of the tree the highest rank occurring infinitely often in the run is even (the parity condition).

As usual, the language \emph{recognised} by an automaton is the set of all trees on which the automaton has some accepting run.

The automaton is called (top-down) \emph{deterministic} if its transition relation is in fact a function $\delta: Q\times A \to Q\times Q$.
An important property of deterministic automata is that they have exactly one run on each input. 
A similar property gives a rise to consideration of a wider subclass of nondeterministic automata:

The automaton is \emph{unambiguous} if it has at most one accepting run on each input. 
In other words, if it accepts a given tree, it can do it in only one way.

Since in a parity automaton the set of states is finite, so is the set of used ranks (the image of the ranking function).
We say that the automaton is of index $(\iota,\kappa)\in\Nat\times\Nat$ if $\iota\le\rank(Q)\le\kappa$.
Since shifting all ranks by an even number does not change the language recognised by the automaton, it suffices to consider indices $(\iota,\kappa)$ for $\iota\in\{0,1\}$.
We say that the language is of index $(\iota,\kappa)$ if there is an automaton of such index recognising this language.
By identifying each index with the class of languages of this index we obtain the Rabin-Mostowski index hierarchy. 
Lines denote inclusion on the diagram of the hierarchy shown on Figure \ref{fig:index}.

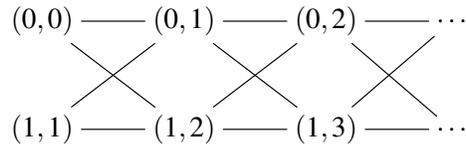
\begin{figure}[htb]
$$\xymatrix{
(0,0) \ar@{-}[r] \ar@{-}[dr] &  (0,1) \ar@{-}[r] \ar@{-}[dr] &  (0,2) \ar@{-}[r] \ar@{-}[dr] & \cdots\\
(1,1) \ar@{-}[r] \ar@{-}[ur] &  (1,2) \ar@{-}[r] \ar@{-}[ur] &  (1,3) \ar@{-}[r] \ar@{-}[ur] & \cdots
}$$
\caption{The Rabin-Mostowski index hierarchy.}\label{fig:index}
\end{figure}

The inclusions come immediately from the definition --- only in some cases we need to shift ranks of corresponding automata by $2$.
Two classes (indices) at the same level of the hierarchy (levels are vertical on the diagram) are called \emph{dual}, and by $\dual{(\iota,\kappa)}$ we denote the class (index) dual to $(\iota,\kappa)$.

Automata of index $(1,2)$ are called \emph{B\"uchi automata}.

If nondeterministic parity automata are considered we talk of \emph{nondeterministic index hierarchy}.
Another often considered and very important variant is \emph{alternating index hierarchy}.
Since we will refer also to this hierarchy, we need to recall shortly the definition of alternating automata.

An \emph{alternating parity tree automaton} is similar to the nondeterministic one with the exception that the set of states is partitioned into two parts $Q=Q_\playerE\cup Q_\playerA$.
The semantics of such an automaton is defined by the game between two players $\playerE$ and $\playerA$.
During a play in this game players construct a run of an automaton in a top-down manner. 
If a given node was labelled by the state from $Q_\playerE$ during this construction then the next transition is chosen by Player $\playerE$, otherwise it is chosen by Player $\playerA$.
The play is won by Player $\playerE$ if in the constructed run on each branch the parity condition holds.
A tree is accepted by the automaton if Player $\playerE$ has a winning strategy in the game defined by this automaton on this tree.

The immediate, but important, fact concerning alternating automata in the context of index hierarchy is:
\begin{remark}\label{rem:alt_dual}
 If a language is of alternating index $(\iota,\kappa)$ then its complement is of alternating index $\dual{(\iota,\kappa)}$.
\end{remark}
\begin{proof}
 It is enough to switch players and shift ranks by one in the automaton recognising a given language.
\end{proof}

The crucial fact about the hierarchies is that they are strict, i.e. all inclusions on Figure \ref{fig:index} are strict. 
This result for nondeterministic hierarchy is due to Niwi{\'n}ski \cite{niwinski86}. 
For alternating hierarchy it was independently proven by Arnold \cite{arnold_altStrict} and Bradfield \cite{bradfield_altStrict}.

\begin{theorem}[Niwi{\'n}ski, Bradfield, Arnold]
 For infinite binary trees, both alternating and nondeterministic index hierarchies are strict. 
\end{theorem}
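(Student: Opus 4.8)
The plan is to reduce the strictness of each hierarchy to a single \emph{non-collapse} statement at every index and to witness that non-collapse by a family of canonical \emph{game tree languages} $W_{(\iota,\kappa)}$. A tree in $W_{(\iota,\kappa)}$ encodes a parity game whose priorities lie in the range $[\iota,\kappa]$: each node carries the owner of the corresponding position (a label in $\{\playerE,\playerA\}$), its priority, and the two available moves; the tree belongs to $W_{(\iota,\kappa)}$ exactly when Player $\playerE$ wins the encoded game. The whole theorem for the alternating case then follows once I show that $W_{(\iota,\kappa)}$ has alternating index $(\iota,\kappa)$ but is \emph{not} of the dual index $\dual{(\iota,\kappa)}$: there is a continuous universality reduction sending any automaton $\aut{A}$ of index $(\iota,\kappa)$ and any tree $t$ to the encoding of the acceptance game, so a collapse of any inclusion in the diagram of Figure~\ref{fig:index} would transfer to the corresponding $W_{(\iota,\kappa)}$, contradicting the non-collapse.

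For the positive direction I would build a single \emph{universal} alternating automaton of index $(\iota,\kappa)$ that, reading the encoding, simply replays the game: at an $\playerE$-node it lets Player $\playerE$ choose a move, at an $\playerA$-node Player $\playerA$ chooses, and it emits the priority read from the label. This automaton accepts exactly $W_{(\iota,\kappa)}$ and visibly uses only ranks in $[\iota,\kappa]$. Positional determinacy of parity games (Mostowski, Emerson--Jutla) then gives that the complement of $W_{(\iota,\kappa)}$ is, up to the obvious relabelling that swaps the two players and shifts priorities by one, the language $W_{\dual{(\iota,\kappa)}}$; combined with Remark~\ref{rem:alt_dual} this shows $\compl{W_{(\iota,\kappa)}}$ is of index $\dual{(\iota,\kappa)}$, i.e. $W_{(\iota,\kappa)}$ is of index $(\iota,\kappa)$ and its complement of the dual index.

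The hard direction, and the main obstacle, is to prove that $W_{(\iota,\kappa)}$ is \emph{not} of the dual index; here I would use a diagonal, self-referential argument. Suppose for contradiction that some alternating automaton $\aut{B}$ of index $\dual{(\iota,\kappa)}$ recognises $W_{(\iota,\kappa)}$. Switching players and shifting ranks (Remark~\ref{rem:alt_dual}) yields an automaton $\compl{\aut{B}}$ of index $(\iota,\kappa)$ recognising the complement. Let $\aut{G}$ be the operator sending a tree $t$ to the encoding $\aut{G}(t)$ of the acceptance game of $\compl{\aut{B}}$ on $t$, with the defining property that $\compl{\aut{B}}$ accepts $t$ iff $\aut{G}(t)\in W_{(\iota,\kappa)}$. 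Crucially the label at depth $n$ of $\aut{G}(t)$ should depend only on $t$ up to depth $n$, so that $\aut{G}$ is contracting in the standard metric on $\trees{A}$ and, this space being complete, has a fixed point $t^\ast=\aut{G}(t^\ast)$. For this $t^\ast$ we get $t^\ast\in W_{(\iota,\kappa)}\iff \aut{G}(t^\ast)\in W_{(\iota,\kappa)}\iff \compl{\aut{B}}\text{ accepts }t^\ast\iff t^\ast\notin W_{(\iota,\kappa)}$, a contradiction. The delicate points are to set up the encoding so that $\aut{G}$ is genuinely contracting and binary, and to verify that the encoded game faithfully captures acceptance on both players' sides.

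Finally, the nondeterministic hierarchy needs a separate treatment, since nondeterministic automata are not closed under complement by a mere swap of players and the game languages are inherently alternating. Here I would follow Niwi\'{n}ski's route and exhibit explicit witness languages of each nondeterministic index --- trees over priority alphabets on which a branch condition must hold --- and prove non-collapse by a combinatorial diagonalisation, supplemented by the topological lower bounds (e.g.\ Borel rank) that already separate the lowest levels such as $(1,2)$ from $(0,1)$. As in the alternating case, the crux is the diagonal step showing that no automaton of the dual (or a lower) index can capture the witness, while positional determinacy again underlies the positive membership.
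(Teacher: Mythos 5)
The paper does not prove this theorem at all: it is imported as a black-box result, with the nondeterministic case credited to Niwi\'nski \cite{niwinski86} and the alternating case to Arnold \cite{arnold_altStrict} and Bradfield \cite{bradfield_altStrict}. So there is no in-paper argument to compare against, and your proposal must be judged against the literature. Your alternating half is essentially a reconstruction of Arnold's published proof: the game languages $W_{(\iota,\kappa)}$, a universal alternating automaton that replays the encoded game, duality by swapping players, and a Banach fixed-point diagonalisation against a hypothetical dual-index automaton. Two points in that half need repair rather than just flagging. First, the property you state --- that the label of $\aut{G}(t)$ at depth $n$ depends only on $t$ up to depth $n$ --- makes $\aut{G}$ only nonexpansive ($1$-Lipschitz), and a nonexpansive map need not have a fixed point even on a compact space (swapping the letters $a$ and $b$ everywhere is a fixed-point-free isometry of $\trees{\{a,b\}}$); you need strict contraction, which in Arnold's encoding comes from the fact that simulating one transition of the automaton consumes at least two levels of the game tree, so the encoding lags strictly behind the input. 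This must be engineered into the encoding and proved, not assumed. Second, your opening ``universality reduction'' paragraph presupposes that dual-index classes are closed under inverse images of your acceptance-game maps, which you never establish; fortunately it is also unnecessary, since the witnesses $W_{(\iota,\kappa)}$ being of index $(\iota,\kappa)$ but not of index $\dual{(\iota,\kappa)}$ already forces every inclusion in Figure \ref{fig:index} to be strict.

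The genuine gap is the nondeterministic half, which you leave as a plan rather than a proof. Worse, the one concrete tool you invoke --- topological lower bounds such as Borel rank --- cannot work beyond the bottom of that hierarchy: the natural witnesses on the $\Pi$-side (trees all of whose branches satisfy the parity condition for a given priority range) are deterministic languages and hence all lie in $\mult 11$, and every regular tree language lies in $\amb 12$, so projective-class arguments are blind to the distinctions between higher nondeterministic indices. Niwi\'nski's actual argument is automata-theoretic (via fixed-point clones and a diagonalisation specific to nondeterministic automata), and the game languages cannot be reused directly because they are inherently alternating, exactly as you observe. As written, your proposal establishes (modulo the contraction repair) the alternating statement but leaves the nondeterministic statement unproved.
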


The languages recognised by nondeterministic (or equivalently alternating) tree automata are called \emph{regular tree languages}.
The languages recognised by deterministic (respectively unambiguous) automata are called deterministic (respectively unambiguous) tree languages.

\subsection{Topology}
For a fixed alphabet $A$, we treat $\trees A$ as a topological space. 
A basic open set is obtained by fixing a finite prefix of trees (a finite tree starting in the root).
Other open sets are obtained by taking arbitrary unions of basic open sets.
If $A$ is finite, this topological space is homeomorphic (i.e. topologically isomorphic) to the Cantor space. 

\subsubsection{Projective hierarchy}
The class $\Borel(X)$ of Borel sets in the topological space $X$ is the least class that:
\begin{itemize}
\item contains all open sets of $X$,
\item is closed under complementation, and
\item is closed under countable unions and intersections.
\end{itemize}
If a space is understood from the context the class is simply denoted by $\Borel$.
  
The class of Borel sets is not closed under projection. 
Each set that is a projection of a Borel set is called \emph{analytic}. 
The class of analytic sets is denoted by $\add 11$. Formally\footnote{The choice of space $T_A$ at the second coordinate is not a commonly made choice in this definition,
but it is best suited for our needs and the resulting notion is the same as in the standard definition.}:
$$\add 11(X) = \left\{P\subseteq X: \exists_{A\textrm{-finite}} \, \exists_{B{\in}\Borel(X{\times}\trees{A})} \; P{=}\pi_1(B)\right\}$$

The rest of the projective hierarchy is defined as follows:
$$\begin{array}{ll}
\mult 1i & \textrm{consists of the complements of the sets from } \add 1i\\
\add 1{i+1} & \textrm{consists of the projections of the sets from } \mult 1i\
\end{array}$$
The sets from the class $\mult 11$ are called \emph{co-analytic}.

An important theorem of Souslin states that if a set is analytic and co-analytic, then it is Borel.

In the sequel we will also use two kinds of intermediate classes. The first kind is:
$$\amb 1i = \add 1i \cap \mult 1i$$
The second one is the $\sigma$-algebra generated by the sets at given level. 
Recall that the $\sigma$-algebra (also called $\sigma$-field) generated by a family $A$ is the closure of the family on countable union, countable intersection and complementation.
 
Figure \ref{fig:projective} present the shape of the projective hierarchy. All the inclusions on the diagram are strict (apart from the leftmost equality coming from Souslin's Theorem).
Proofs of all the facts mentioned in this section can be found in \cite[Chapters 14 and 37]{kechris}.

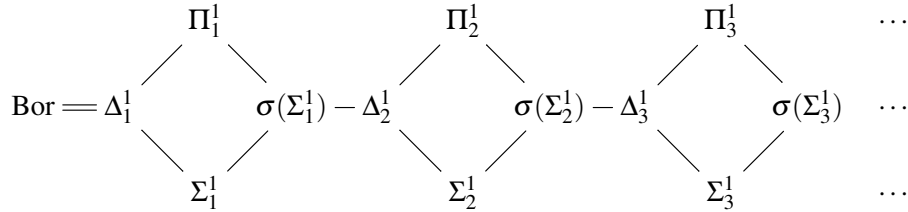
\begin{figure}[htd]
$$\xymatrix@!=.3cm{
& & {\mult 1 1} \ar@{-}[dr] & &  &{\mult 1 2} \ar@{-}[dr] & & & {\mult 1 3} \ar@{-}[dr] & & {\cdots}\\
{\Borel} \ar@{=}[r] & {\amb 11} \ar@{-}[ur] \ar@{-}[dr] & & {\sigma({\add 1 1})} \ar@{-}[r] & {\amb 12} \ar@{-}[ur] \ar@{-}[dr] & & {\sigma({\add 1 2})} \ar@{-}[r] & {\amb 13} \ar@{-}[ur] \ar@{-}[dr] & & {\sigma({\add 1 3})} & {\cdots}\\
& & {\add 1 1} \ar@{-}[ur] & & & {\add 1 2} \ar@{-}[ur] & & & {\add 1 3} \ar@{-}[ur] & & {\cdots}
}$$
\caption{The projective hierarchy}\label{fig:projective}
\end{figure}

\subsubsection{Topological Complexity}
A class of subsets of topological spaces is a \emph{topological complexity class} if it is closed under preimages of continuous functions.
In particular any class depicted on Figure \ref{fig:projective} (if we do not fix any specific space) is a topological complexity class.
Analogously to the complexity theory, there are the notions of \emph{reductions} and \emph{completeness}. 
Let $X$ and $Y$ be two topological spaces and let $K\subseteq X$ and $M\subseteq Y$. A continuous function $f: X \to Y$ 
is a \emph{reduction} of $K$ to $M$ if $K{=}f^{-1}(M)$.
In such case we say that $K$ is \emph{Wadge-reducible} to $M$, or that $M$ is \emph{topologically harder} (\emph{more complex}) than $K$.

For a topological complexity class $\mathbf K$, a set $M$ is called $\mathbf K$-\emph{hard} if any set $K\in \mathbf K$ is Wadge\nobreakdash-re\-du\-cible to $M$.
We say that $M$ is $\mathbf K$\nobreakdash-\emph{complete} if additionally $M\in\mathbf K$.

\subsection{Topological Complexity of Automata}
We recall some basic facts binding automata and index hierarchies with topological hierarchies. 
We use these facts in further discussion.

\begin{theorem} 	
Each regular language of infinite trees is in $\amb 12$.
\end{theorem}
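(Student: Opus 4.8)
The plan is to reduce the claim to Rabin's complementation theorem, already cited in the introduction, together with the standard projection characterisation of the projective classes. A regular tree language $L$ is recognised by some nondeterministic parity automaton $\autForLang{A}$. The key observation is that the acceptance condition ``there exists an accepting run'' is an existential quantification over runs, and a run is itself a tree (a labelling $\rho:\{l,r\}^\ast\to Q$, i.e. an element of $\trees{Q}$), so membership in $L$ is the projection of a simpler set.

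First I would make precise the set being projected. Consider the space $\trees A\times\trees Q$ and the set
$$
R=\left\{(t,\rho):\rho\textrm{ is an accepting run of }\autForLang{A}\textrm{ on }t\right\}.
$$
Then $L=\pi_1(R)$ by the very definition of recognition. So by the definition of $\add 11$ recalled in the Preliminaries, it suffices to show that $R$ is Borel; this gives $L\in\add 11$ immediately. The set $R$ splits into two conjuncts. The first, that $\rho$ is a run consistent with $\delta$ and starts in $q_0$, is a countable intersection over nodes $v$ of the clopen conditions $(\rho(v),t(v),\rho(vl),\rho(vr))\in\delta$, hence closed. The second, the parity acceptance condition, quantifies universally over the countably many branches and, along each branch, asserts that the $\limsup$ of the ranks is even; since there are finitely many ranks, each such branch-condition is a Boolean combination of ``rank $k$ occurs infinitely often'', which is $\mult 02$, so acceptance is Borel. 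Thus $R$ is Borel and $L\in\add 11$.

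For the dual containment $L\in\mult 11$ I would invoke Rabin complementation: by \cite{rabin69} the complement $\compl L$ is again regular, recognised by some nondeterministic parity automaton, so the argument of the previous paragraph applied to $\compl L$ yields $\compl L\in\add 11$, i.e. $L\in\mult 11$. Combining the two gives $L\in\add 11\cap\mult 11=\amb 11\subseteq\amb 12$, which is even stronger than the stated bound. (One could instead prove the weaker bound $L\in\amb 12$ directly by observing that ``$\rho$ accepts'' is $\mult 02\subseteq\add 11$, making $R$ analytic and hence $L=\pi_1(R)$ a projection of an analytic set, which lands in $\add 12$, and symmetrically $\compl L\in\add 12$; but the Borel analysis above is clean enough that the sharper $\amb 11$ bound comes at no extra cost.)

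The main obstacle is conceptual rather than technical: one must be careful that Rabin complementation is genuinely needed for the $\mult 11$ half, since there is no purely topological reason that a projection of a Borel set should be co-analytic. The elementary projection argument alone only delivers $L\in\add 11$; the symmetry that pulls $L$ down into $\amb 11$ rests entirely on the nontrivial closure of regular languages under complementation. The branch analysis for the parity condition is routine once one notes that $\{l,r\}^\ast$ is countable and the rank set is finite, so I would not dwell on it.
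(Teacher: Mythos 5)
Your decomposition of the problem (consistency of the run, acceptance of the run, then Rabin complementation for the dual bound) follows the paper's outline, but there is a genuine error at the central step: you claim the parity acceptance condition ``quantifies universally over the countably many branches.'' The \emph{nodes} $\{l,r\}^\ast$ form a countable set, but the \emph{branches} of the infinite binary tree do not --- the space of branches $\{l,r\}^\omega$ is homeomorphic to the Cantor space, hence uncountable. The universal quantification over branches is therefore not a countable intersection but a co-projection, and this is exactly where the complexity jumps. For a single fixed branch the parity condition is indeed a finite Boolean combination of $\mult 02$ conditions, so the set of accepting runs is $\mult 11$ --- but not Borel in general: for instance, the set of $\{0,1\}$-labelled trees all of whose branches carry only finitely many $1$'s is $\mult 11$-complete. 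The paper's proof flags precisely this point with the phrase ``for all branches corresponds to co-projection.''

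The error propagates to a conclusion that cannot be right. With $R$ only $\mult 11$, the projection $L=\pi_1(R)$ lands in $\add 12$, not $\add 11$, and Rabin complementation then yields exactly $L\in\amb 12$ --- the stated bound, and the paper's actual argument. Your sharper conclusion $L\in\amb 11=\Borel$ would mean every regular tree language is Borel, which the paper itself refutes: the language $G$ of Section 2 is regular (even unambiguous) and $\add 11$-complete, hence not Borel, and Section 3 constructs regular languages beyond all of $\sigma(\add 11)$. Your parenthetical alternative rests on the same flaw: the claim that ``$\rho$ accepts'' is $\mult 02$ holds for automata on $\omega$-words, where a run has a single branch, but not for trees. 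The Rabin half of your argument is correct and matches the paper; the missing idea is that branch quantification costs a second co-projection, which is why the theorem stops at $\amb 12$ rather than $\amb 11$.
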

\begin{proof}[(sketch)]
 For a fixed branch $\alpha$ of an infinite tree, the set of runs for which the parity condition on $\alpha$ holds, is Borel.
 Therefore, the set of accepting runs is a $\mult 11$ set --- ``for all branches'' corresponds to co-projection.
 Now, for a fixed nondeterministic parity automaton, the accepted trees are the ones for which there exists an accepting run, so the recognised language is $\add 12$ as a projection of $\mult 11$ set.
 
 By Rabin's complementation lemma (see \cite{rabin69}), the complement of the language recognised by a nondeterministic automaton is also recognised by a nondeterministic automaton. 
 If a language and its complement are both $\add 12$ sets, they are in fact $\amb 12$ sets.
\end{proof}
\begin{theorem} 	
Each language recognised by a deterministic parity tree automaton is in $\mult 11$.
\end{theorem}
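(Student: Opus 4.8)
The plan is to isolate exactly what determinism buys us over the general nondeterministic case analysed in the previous theorem: since a deterministic automaton has precisely one run on each input, the existential quantifier over runs---which is what lifts an arbitrary regular language up to $\add 12$---simply disappears, and we are left with a $\mult 11$ condition evaluated along a single, canonically determined run.

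Concretely, I would first record that a deterministic automaton determines a total \emph{run map} $r:\trees A\to\trees Q$, assigning to each tree $t$ its unique run $r(t)$. Because $\delta$ is a function $Q\times A\to Q\times Q$, the state that $r(t)$ places at a node $v$ is computed top-down and depends only on the labels of $t$ on the path from the root to $v$; hence it depends on a finite prefix of $t$, and so $r$ is continuous. Next I would reuse the observation from the proof of the preceding theorem: the set $\mathrm{Acc}\subseteq\trees Q$ of accepting runs is $\mult 11$, since for a fixed branch the parity condition is Borel in the run, and the quantification ``for all branches'' is a co-projection, sending Borel sets to $\mult 11$ sets.

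Finally, the language recognised by the automaton is exactly $\{t:r(t)\in\mathrm{Acc}\}=r^{-1}(\mathrm{Acc})$, the set of trees whose unique run is accepting. As $\mult 11$ is a topological complexity class, it is closed under preimages of continuous functions, so $r^{-1}(\mathrm{Acc})\in\mult 11$, which is the claim.

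I do not expect a serious obstacle here, since the argument is really a strict simplification of the $\amb 12$ proof rather than a new one. The one step that needs care is the verification that $r$ is well defined, single-valued and continuous; but all three follow immediately from $\delta$ being a function, together with the already-quoted fact that a deterministic automaton has exactly one run on each input.
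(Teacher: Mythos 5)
Your proof is correct and follows essentially the same route as the paper: the paper's (sketched) argument is exactly that a deterministic automaton induces a continuous map from trees to runs, and the recognised language is the preimage under this map of the $\mult 11$ set of accepting runs. You merely spell out the details the paper leaves implicit (continuity of the run map via dependence on finite prefixes, and closure of $\mult 11$ under continuous preimages), which is fine.
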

\begin{proof}[(sketch)]
 Each deterministic automaton defines a continuous function mapping a tree to the run on it.
 The set of accepted trees is, then, the inverse image of the set of accepting runs under a continuous function. The set of accepting runs is $\mult 11$, then so is the recognised language.
\end{proof}
\begin{theorem}\label{th:compl_buchi}
 Each language recognised by alternating parity automaton of index $(1,2)$ (resp. $(0,1)$) is in $\add 11$ (resp. $\mult 11$) topological class.
\end{theorem}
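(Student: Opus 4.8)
The plan is to derive both statements from a single argument by using duality. By Remark~\ref{rem:alt_dual}, if $L$ has alternating index $(1,2)$ then $\compl{L}$ has the dual index $\dual{(1,2)}=(0,1)$; since $\mult 11$ is by definition the class of complements of $\add 11$ sets, the implication ``index $(0,1)\Rightarrow\mult 11$'' is obtained from ``index $(1,2)\Rightarrow\add 11$'' simply by passing to complements. So I would prove only the B\"uchi case, that an alternating automaton $\autForLang{A}$ of index $(1,2)$ recognises an $\add 11$ set $L$.

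Here a tree $t$ is accepted exactly when Player $\playerE$ has a winning strategy in the acceptance game $G(\autForLang{A},t)$; as the ranks lie in $\{1,2\}$, the parity condition along a play means precisely that rank $2$ is seen infinitely often, so this is a B\"uchi game for $\playerE$. The goal is to present ``$t\in L$'' as a projection of a Borel set. First, $t\in L$ iff there is a strategy $\sigma$ for $\playerE$ all of whose consistent plays satisfy the B\"uchi condition. Strategies can be taken to range over a fixed compact space, independent of $t$ (the game positions are the pairs $(v,q)\in\{l,r\}^\ast\times Q$, and a strategy assigns to each history a transition out of the finite set $\delta$), with the legality of $\sigma$ against the actual input $t$ recorded as a closed side condition. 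The plays consistent with a fixed $\sigma$ form a \emph{finitely} branching tree $U_{\sigma}$, since from each position only finitely many transitions and two directions are available.

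The crux is to convert the universal ``on every play'' into a Borel statement. That $\sigma$ is winning means every branch of $U_{\sigma}$ meets rank $2$ infinitely often, i.e.\ $U_{\sigma}$ has no infinite branch that is eventually confined to rank $1$. Since $U_{\sigma}$ is finitely branching, K\"onig's Lemma turns this into: every maximal rank-$1$ segment is finite. This in turn is witnessed by a ranking $r$ that assigns a natural number to each position and strictly decreases along every $\sigma$-consistent rank-$1$ step; existence of such $r\in\Baire$ is equivalent to $\sigma$ being winning (in one direction take $r(p)$ to be the length of the longest rank-$1$ continuation from $p$, in the other use well-foundedness of $\omega$), and for fixed $(t,\sigma)$ the descent requirement on $r$ is closed. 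Therefore $t\in L$ iff $\exists\sigma\,\exists r\,\Phi(t,\sigma,r)$ with $\Phi$ Borel, so $L$ is the projection of a Borel set along a standard parameter and hence $\add 11$ (the precise choice of parameter space being immaterial, as noted after the definition of $\add 11$).

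The main obstacle is exactly this conversion. A naive estimate bounds ``$\sigma$ is winning'' only by $\mult 11$, since ``for all plays'' is a co-projection of a Borel condition, and this would give merely the general regular bound $\add 12$ for $L$. The improvement to $\add 11$ is what the B\"uchi shape buys us: the universal branch quantifier is absorbed into a single existentially guessed ranking certificate via K\"onig's Lemma. The remaining points are routine --- that arbitrary (not necessarily positional) $\playerE$-strategies form a standard Borel space and that legality and $\sigma$-consistency are Borel --- and the dual index $(0,1)$ needs no separate treatment.
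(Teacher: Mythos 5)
Your proof is correct, but it takes a genuinely different route from the paper. The paper's proof is essentially a two-citation argument: by Arnold and Niwi\'nski \cite{FixedPointTrees_AN92}, every alternating $(1,2)$ language is already recognised by a \emph{nondeterministic} B\"uchi automaton, and by Rabin \cite{rabin70} such languages are definable by existential MSO formulas, hence analytic; the $(0,1)$ case then follows by the same duality via Remark~\ref{rem:alt_dual} that you use. You instead avoid dealternation entirely and argue directly on the acceptance game of the alternating automaton: acceptance becomes an existential quantification over $\playerE$-strategies (a compact parameter space, with legality a closed side condition), and the crucial point --- that for a fixed strategy $\sigma$ the universal condition ``every $\sigma$-consistent play satisfies the B\"uchi condition'' is Borel --- is supplied by K\"onig's Lemma applied to the finitely branching play tree $U_\sigma$, or equivalently by your existentially guessed ranking certificate $r$, giving $L$ as a projection of a Borel set. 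This is exactly the standard argument that the set of accepting runs of a nondeterministic B\"uchi automaton is Borel, lifted from runs to strategies, so your proof in effect reproves the content that the paper imports from \cite{FixedPointTrees_AN92} and \cite{rabin70} in the special form needed here. What your approach buys is self-containment and elementarity (no dealternation theorem, no MSO characterisation), and it makes visible precisely where the B\"uchi shape is used --- the same argument correctly fails for $(0,1)$, as you note; what the paper's approach buys is brevity and placement of the result within known automata theory. The only step you leave implicit is the equivalence between the paper's run-construction semantics of alternating automata and the path-following game you analyse (where $\playerA$ also resolves directions); this is routine bookkeeping, since a path-strategy for $\playerE$ induces a run-strategy branchwise, and conversely a winning run-strategy restricted along any path, with off-path choices of $\playerA$ filled in arbitrarily, yields a winning path-strategy.
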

\begin{proof}
 It was shown by Arnold and Niwi\'nski \cite{FixedPointTrees_AN92} that each language of alternating index $(1,2)$ can be recognised by a nondeterministic automaton of index $(1,2)$ (a B\"uchi automaton).
 Rabin proved in \cite{rabin70} that each such language can be described by existential formula of monadic logic. This implies that they are analytic ($\add 11$).
 
 The fact for index $(0,1)$ comes from the duality --- alternating $(0,1)$ automata recognise complements of sets recognised by alternating $(1,2)$ automata (see Remark \ref{rem:alt_dual}).
\end{proof}

\section{Analytic Complete Language}\label{sec:G}
The result presented in this section 
was inspired by the unpublished work of Bilkowski on the decidability of Unambiguity Problem \cite{bilkowski_personal2010}, 
and by the decidability result presented by Niwi{\'n}ski and Walukiewicz in \cite{gap}.
Bilkowski has shown that the complement of a deterministic language is unambiguous if and only if the language is recognised by a thin automaton, i.e. by an automaton that has only countably many non-trivial paths in each accepting run. 
A path is trivial if, from some moment on, it is labelled only by all-accepting or all-rejecting states
.

The deterministic automaton recognising the complement of language $G$ described below 
is thin in Bilkowski's sense, and has split property --- the sufficient condition for the $\mult 11$ hardness 
from the result of \cite{gap}.
We do not give precise definitions, nor do we discuss the above results since the proofs in this article do not rely on them --- they were only used while constructing the example presented in this section. 

We call a branch of a binary tree over the alphabet $\{a,b\}$ \emph{good} if:
\begin{enumerate}
  \item it is labeled only with \emph{a}'s,\label{cond:a}
  \item it turns left infinitely many times.\label{cond:left}
\end{enumerate}
Let: $$G=\left\{t\in \trees{\{a,b\}}: t \textrm{ has a good branch}\right\}$$

First, we prove the crucial lemma:
\begin{lemma}\label{lemma:leftmost}
 If an infinite binary tree over the alphabet $\{a,b\}$ 
has a good branch, then it has the left-most such branch, 
i.e. a good branch such that there is no good branch 
to the left.
\end{lemma}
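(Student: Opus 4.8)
The plan is to construct the leftmost good branch explicitly by a greedy, left-preferring descent through the tree, and then to verify separately that the branch so obtained is good and that nothing good lies to its left.

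First I would fix the left-to-right (lexicographic) order on branches, writing $l < r$, so that ``$\beta$ is to the left of $\alpha$'' means $\beta$ precedes $\alpha$ lexicographically. For a node $v\in\{l,r\}^\ast$ let $S(v)$ denote the set of good branches passing through $v$. The basic observation is the decomposition $S(v)=S(vl)\cup S(vr)$, so that whenever $S(v)\neq\emptyset$ at least one of $S(vl)$, $S(vr)$ is nonempty. Starting from $S(\emptyword)\neq\emptyset$ (which holds by hypothesis), I build a branch $\alpha$ node by node, writing $\alpha_{<n}$ for its length-$n$ prefix: at the node $\alpha_{<n}$ I turn left when $S(\alpha_{<n}\,l)\neq\emptyset$ and turn right otherwise. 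An immediate induction keeps the invariant $S(\alpha_{<n})\neq\emptyset$ alive at every finite stage.

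The leftmost property is then easy, and I would dispatch it first: if some good $\beta$ were strictly to the left of $\alpha$, then at the first place $n$ where they differ we would have $\beta_n=l$ while $\alpha_n=r$; but $\alpha$ turned right at $\alpha_{<n}$ only because $S(\alpha_{<n}\,l)=\emptyset$, contradicting $\beta\in S(\alpha_{<n}\,l)$.

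The real work, and the step I expect to be the main obstacle, is showing that the limiting branch $\alpha$ is itself good --- a limit of prefixes that each extend to a good branch need not be good. The all-$a$ condition is painless: for every $n$ the good branch witnessing $S(\alpha_{<n})\neq\emptyset$ passes through $\alpha_{<n}$ and carries only $a$'s, whence $t(\alpha_{<n})=a$. The delicate point is the ``infinitely many left turns'' condition, which I would handle by contradiction. If $\alpha$ turned left only finitely often, it would read $v\cdot r^\omega$ past some node $v=\alpha_{<N}$, so $\alpha$ chose $r$ at every $v r^j$, forcing $S(v r^j l)=\emptyset$ for all $j\geq 0$. But $S(v)\neq\emptyset$ supplies a good branch $\beta$ through $v$, and, being good, $\beta$ must turn left somewhere after $v$; its first such turn has the form $v r^j l\cdots$, witnessing $S(v r^j l)\neq\emptyset$ --- a contradiction. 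Hence $\alpha$ turns left infinitely often and is good, completing the proof.
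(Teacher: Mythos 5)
Your proof is correct and follows essentially the same strategy as the paper's: a greedy left-preferring descent maintaining the invariant that a good branch passes through each selected node, with the leftmost property immediate from the construction and goodness established by contradiction (a good branch through $v$ would have to turn left at some $vr^jl$, which the construction would have followed). The only cosmetic difference is that you locate the first left turn of the witnessing branch directly, while the paper phrases the same contradiction via the divergence point of the two branches.
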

\begin{proof}
 Assume that a tree $t$ has a good branch.
The construction of the left-most good branch goes as follows.
We start from the root.
If we have constructed the prefix of the branch up to the node $v$ we advance 
to the left descendant if there are good branches 
going through it. Otherwise we advance to the right descendant.
Call the branch constructed by this procedure $\rho(t)$.

By the construction, it is clear that there is no good branch to the left from $\rho(t)$.
Now we prove that $\rho(t)$ is good.
Note that during the construction we maintain the invariant that there is a good branch 
 going through a considered node. 
In particular, all nodes we have selected are labeled with \emph{a},
therefore, we only need to verify property \ref{cond:left} from the definition to proof goodness 
of $\rho(t)$. 

Assume that $\rho(t)$ turns left only finitely many times.
Then there is a vertex $v$ on the branch $\rho(t)$ after which $\rho(t)$ turns only right.
Let us take a good branch going through $v$, and call it $\sigma$. 
By the assumption, $\rho(t)$ is not good, so branches $\rho(t)$ and $\sigma$ diverge in some vertex $w$.
Since $w$ is below $v$, $\rho(t)$ goes right from $w$ and $\sigma$ goes left.
Since $\sigma$ is good, the construction should have selected the left
descendant of $w$, but have selected the right one. That yields a contradiction,
so $\rho(t)$ turns left infinitely many times.
\end{proof}

Now let $L=\compl{G}$.

\begin{prop}\label{prop:G-unamb}
 Language $G$ is recognised by an unambiguous automaton and its complement $L$ is recognised by a deterministic automaton.
\end{prop}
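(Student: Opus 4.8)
The plan is to treat the two halves separately: first I would build a deterministic automaton for the complement $L=\compl G$, and then reuse it as a component inside an unambiguous automaton for $G$. For $L$ the aim is a deterministic parity automaton whose unique run decides, along each branch, whether that branch is good, so that acceptance (the parity condition on \emph{every} branch) coincides with the absence of good branches. Since a deterministic run labels each node according only to the path from the root, I only need to remember two things about the current path: whether it has already seen a $b$, and by which move the current node was reached. Concretely I take states $W_L,W_R$ recording that the path to the parent is all $a$ and that the node was reached by a left (resp.\ right) move, together with an absorbing dead state $D$ entered permanently once a $b$ appears; reading $a$ in a white state keeps both children white (as $W_L$ and $W_R$), and reading $b$ sends both children to $D$. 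Setting $\rank(W_L)=1$ and $\rank(W_R)=\rank(D)=0$, the highest rank occurring infinitely often along a branch is odd precisely when the branch stays white forever and turns left infinitely often, i.e.\ precisely when it is good. Hence the automaton accepts $t$ iff no branch of $t$ is good, which is exactly $t\in L$; note it has index $(0,1)$, consistent with Theorem \ref{th:compl_buchi}.

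For $G$ itself the idea is to guess a good branch and verify it, but to \emph{force} the guess to be the left-most good branch of Lemma \ref{lemma:leftmost}, so that the accepting run becomes unique. The automaton walks down a guessed branch in branch states $g_L,g_R$ (again recording the last move), and at each branch node labelled $a$ it nondeterministically either turns left, sending the off-branch right subtree to an all-accepting state $\top$, or turns right, sending the off-branch left subtree to the initial state of the deterministic $L$-automaton just built; reading $b$ on the branch leads to a rejecting sink. With $\rank(g_L)=2$, $\rank(g_R)=1$ and $\rank(\top)=0$, the guessed branch satisfies the parity condition exactly when it carries infinitely many left turns, so an accepting run can exist only if the guessed branch is all $a$ and turns left infinitely often, i.e.\ is genuinely good. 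Existence of an accepting run for $t\in G$ then follows by running this on $\rho(t)$: the branch is good, each right turn occurs at a node whose left subtree contains no good branch and is therefore accepted by the embedded $L$-automaton, and every off-branch right subtree is swallowed by $\top$.

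The crux, and the step I expect to require the most care, is uniqueness, and here the $L$-certification does the real work. In any accepting run the guessed branch must be good, and whenever it turns right at a node $v$ the embedded deterministic automaton must accept $t_{vl}$, which by construction happens iff $t_{vl}$ has no good branch. I would then show by induction down the branch that these constraints pin the guess to $\rho(t)$: at a left turn the left subtree visibly contains a good branch (the tail of the guessed branch), matching the rule defining $\rho(t)$, while at a right turn the certification forbids a good branch on the left, again matching that rule. Once the branch is forced to equal the left-most good branch, the rest of the run is determined, since $\top$ behaves deterministically and the $L$-automaton is deterministic; therefore there is at most one accepting run, and the automaton for $G$ is unambiguous.
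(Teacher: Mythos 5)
Your proposal is correct and follows essentially the same construction as the paper: the same three-state deterministic automaton for $L$ (rank $1$ on left moves along all-$a$ paths, an accepting sink after $b$), and the same unambiguous automaton for $G$ that guesses a branch with ranks $2/1$ for left/right turns, sends off-branch right subtrees to $\top$ and off-branch left subtrees into the embedded $L$-automaton, with uniqueness forced via Lemma \ref{lemma:leftmost}. Your explicit induction pinning the guessed branch to the left-most good branch just fills in what the paper dismisses as ``not hard to see.''
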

\begin{proof}
 Thanks to Lemma \ref{lemma:leftmost}, to prove unambiguity of $G$ it is enough to show 
an automaton that guesses the left-most good branch and verifies 
correctness of the guess.
The idea is that automaton goes along a branch labelled with $a$'s, 
proving that the branch turns left infinitely many times,
and proving that everything that diverges to the left from the branch does not have a good branch (i.e. belongs to $L$),
and not caring what happens to the right from the branch.

Let us then start with constructing a deterministic automaton $\autForLang L$ recognising $L$.
It has 3 states: states $l_0$ and $l_1$ (subscript indicates rank) occur on paths that have had only $a$'s so far, 
and track turns to the left; state $\top_0$ is all-accepting (i.e. self-looping with rank $0$).
Initial state is $l_1$ and the transitions are as follows:
$$
\trans{l_\ast}{a}{l_1}{l_0} \;
\trans{l_\ast}{b}{\top_0}{\top_0}
$$
The automaton uses ranks $\{0,1\}$. 
Note that, for given tree $t$, the run of $\autForLang L$ on $t$ has a branch with infinitely many ranks $1$ on it if and only if there is an $a$-labelled branch turning left infinitely often in $t$. Therefore the automaton recognises $L$.

The automaton $\autForLang G$ for language $G$ uses $\autForLang L$ as a component. It has 7 states and uses ranks $0,1,2$.
States $g_1$ and $g_2$ (again, subscript indicates rank) are used to track the branch;
states of $\autForLang L$ --- to prove non-existence of a good branch in a subtree;
state $\top_2$ is all-accepting, and state $\bot_1$ is all-rejecting.
The initial state is $g_1$. The automaton uses the following transitions:
$$
\begin{array}{cc}
 & L\textrm{-part} \\
\cline{2-2}
\trans{g_\ast}{a}{g_2}{\top_2} \;
\trans{g_\ast}{a}{l_1}{g_1} \;
\trans{g_\ast}{b}{\bot_1}{\bot_1} \quad
&
  \multicolumn{1}{|c|}{
    \trans{l_\ast}{a}{l_1}{l_0} \;
    \trans{l_\ast}{b}{\top_0}{\top_0}
  } \\
\cline{2-2}
\end{array}
$$

It is not hard to see that presented automaton implements described idea, 
therefore accepts if and only if given tree has a good branch.
It is unambiguous, because it only can accept by labelling the left-most good branch with $g$ states.
\end{proof}
\begin{remark}\label{rem:indOfGL}
 Automaton $\autForLang{L}$ is of index $(0,1)$, and $\autForLang{G}$ is of index $(0,2)$.

 Note that automaton $\autForLang{G}$ can be transformed into an equivalent one of index $(1,3)$, by making $L$\nobreakdash-part use ranks $\{2,3\}$ instead of $\{0,1\}$.
\end{remark}

\begin{prop}\label{prop:G-compl}
 Set $G$ is $\add 11$ complete.
\end{prop}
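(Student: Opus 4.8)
The plan is to prove completeness in the two usual halves: membership $G\in\add 11$, which is immediate from results already collected above, and $\add 11$-hardness, which is the substantial part and which I would establish by a continuous reduction from a canonical analytic-complete set.

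For membership, recall from Proposition \ref{prop:G-unamb} that the complement $L=\compl{G}$ is recognised by a \emph{deterministic} parity automaton. By the theorem stating that every deterministic tree language lies in $\mult 11$, we get $L\in\mult 11$, and hence $G=\compl{L}\in\add 11$. (Equivalently one may exhibit $G$ directly as the projection $\pi_1(B)$ of the set $B$ of pairs $(t,\beta)$ for which $\beta$ is a good branch of $t$; being labelled only with $a$'s is a closed condition on $(t,\beta)$ and turning left infinitely often is a $\mult 02$ condition, so $B$ is Borel and $G$ is analytic.)

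For hardness I would reduce the set $\mathrm{IF}=\{T\subseteq\omega^{<\omega}:T\text{ is a tree with an infinite branch}\}$ of ill-founded trees, a standard $\add 11$-complete subset of $2^{\omega^{<\omega}}$ (homeomorphic to $\Cantor$), to $G$. The crucial device is an encoding of $\omega$-branching into binary branching that turns ``having an infinite branch'' into ``turning left infinitely often'': set $e(n_0,\dots,n_{k-1})=r^{n_0}l\,r^{n_1}l\cdots r^{n_{k-1}}l$, so that an infinite branch $(n_0,n_1,\dots)$ of $T$ is sent to the binary branch $r^{n_0}l\,r^{n_1}l\cdots\in\{l,r\}^\ast{}^{\,\omega}$, which has infinitely many left turns. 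For the labelling, write each $v\in\{l,r\}^\ast$ uniquely as $r^{m_0}l\cdots r^{m_{j-1}}l\,r^{p}$ and let $\mathrm{dec}(v)=(m_0,\dots,m_{j-1})$ be its committed prefix; define the reduction by $f(T)=t$ where $t(v)=a$ if $\mathrm{dec}(v)\in T$ and $t(v)=b$ otherwise. Then I would check continuity (the label $t(v)$ depends only on whether the single sequence $\mathrm{dec}(v)$, of length $\le|v|$, belongs to $T$), and correctness: if $(n_0,n_1,\dots)$ is a branch of $T$ then every prefix of $r^{n_0}l\,r^{n_1}l\cdots$ has committed part an initial segment of the branch, hence lies in $T$ and carries label $a$, so that branch is good; conversely a good branch $\beta$ has infinitely many left turns, so $\beta=r^{m_0}l\,r^{m_1}l\cdots$, and since $\beta$ is labelled only with $a$'s each prefix $r^{m_0}l\cdots r^{m_{k-1}}l$ forces $(m_0,\dots,m_{k-1})\in T$, whence $(m_0,m_1,\dots)$ is an infinite branch of $T$. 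This gives $\mathrm{IF}=f^{-1}(G)$, and composing $f$ with a reduction of an arbitrary analytic set to $\mathrm{IF}$ yields $\add 11$-hardness; together with membership this proves $\add 11$-completeness.

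I expect the main obstacle to be exactly the tension between the two clauses of goodness under this encoding: arranging that condition~\ref{cond:left} (infinitely many left turns) matches the existence of an infinite branch, while keeping condition~\ref{cond:a} (the all-$a$ labelling) \emph{continuously} checkable. Labelling $v$ by ``$v$ is a prefix of $e(s)$ for some $s\in T$'' would break continuity, since it asks whether a node of $T$ has a child with a sufficiently large label — an infinitary condition; routing the label through $\mathrm{dec}(v)$ is what repairs this. The one point that genuinely needs care is ruling out a spurious good branch, i.e. an all-$a$ path whose committed prefixes escape $T$, and this is precisely what the block-by-block uniqueness of the decoding $\mathrm{dec}$ prevents.
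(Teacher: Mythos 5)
Your proof is correct, and its substantial half --- the $\add 11$-hardness --- is essentially the paper's own argument: a continuous reduction of the ill-founded trees $IF$ to $G$ via the encoding $r^{n_1}l\,r^{n_2}l\cdots r^{n_k}l$, under which infinite branches of $T$ correspond exactly to good branches of the image tree. Your labelling is an inessential variant: the paper labels the root and all right-descendant nodes with $a$ unconditionally, while you label a node $a$ iff its committed prefix $\mathrm{dec}(v)$ lies in $T$; both versions make the equivalence and the continuity check go through, since in a branch turning left infinitely often the nodes ending in $r$ have their committed prefix equal to that of the next $l$-ending node anyway. Where you genuinely diverge is the membership half: the paper forward-references Proposition \ref{prop:G-Buchi} ($G$ is recognised by a nondeterministic B\"uchi automaton) together with Theorem \ref{th:compl_buchi} (B\"uchi-recognisable languages are $\add 11$), whereas you observe that $L=\compl{G}$ is deterministic by Proposition \ref{prop:G-unamb}, hence $\mult 11$, hence $G\in\add 11$. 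Your route is more self-contained --- it uses only results already established at that point of the paper and avoids the forward reference --- while the paper's choice highlights B\"uchi recognisability of $G$, which it needs anyway for the later corollary on non-Borel unambiguous B\"uchi languages. Your parenthetical direct argument (projecting the Borel set of pairs $(t,\beta)$ with $\beta$ a good branch) also works, modulo the routine point that the branch space $\{l,r\}^\omega$ must be coded as a closed subset of a space of trees to fit the paper's literal definition of $\add 11$.
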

\begin{proof}
 To prove the hardness we continuously reduce the set $IF$ 
of $\omega$-branching trees with an infinite branch to our set $G$.
Set $IF$ is a well known $\add 11$-complete subset of the space $Tr$ of trees on $\omega$, 
i.e. prefix-closed subsets of $\omega^\ast$ (see e.g. \cite[Theorem 27.1]{kechris}).
The topology on $Tr$ is similar to the one on $\trees A$: a basic open set is obtained by fixing some finite part of trees.
E.g. for some integer $n$, we fix what nodes out of $\{1,...,n\}^{\le n}$ belong and what do not belong to all trees in a set.

We construct a reducing function $f:Tr\to \trees{\{a,b\}}$.
Fix a tree $t\in Tr$. Put labels $a$ to the root and all right descendant nodes in the tree $f(t)$.
For each node $n_1n_2n_3\dots n_k$ of tree $t$ we put label $a$ to the node 
$r^{n_1}lr^{n_2}lr^{n_3}l\dots r^{n_k}l$ in $f(t)$. Remaining left descendant nodes obtain label $b$.

Note that $f(t)$ has an $a$-labelled branch that turns left infinitely many times (i.e. a good branch)
if and only if $t$ has an infinite branch.
Therefore
$$f(t)\in G \iff t\in IF$$
So $f$ indeed reduces $IF$ to $G$. 

Function $f$ is continuous, because the labels at $n$'th level of the tree $f(t)$ are determined by the finite part of a tree $t$, namely the part in $\{1,...,n\}^{\le n}$.

The upper topological complexity bound of set $G$ comes from Proposition \ref{prop:G-Buchi}, that will be proven later, and from Theorem \ref{th:compl_buchi}.
\end{proof}

Therefore, we have proven the following:
\begin{theorem}
 There is a $\add 11$ complete language of infinite trees, that is recognised by an unambiguous parity automaton.
\end{theorem}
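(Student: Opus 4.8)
The plan is to take the language $G$ itself as the required witness, so that the theorem reduces to synthesising the two propositions already established in this section. There are precisely two things to check: that $G$ is recognised by an unambiguous parity automaton, and that $G$ is $\add 11$-complete. For the first, I would simply invoke Proposition \ref{prop:G-unamb}, whose construction supplies the unambiguous automaton $\autForLang G$. Its unambiguity is exactly the content of Lemma \ref{lemma:leftmost}: every tree in $G$ has a canonical left-most good branch, so the automaton can be forced to commit to that branch via its $g$-states, leaving no room for a second accepting run.

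For the second, I would invoke Proposition \ref{prop:G-compl}. The hardness half is witnessed there by the continuous reduction $f$ of the analytic-complete set $IF$ into $G$, which encodes each node of an $\omega$-tree along a path $r^{n_1}lr^{n_2}l\cdots$ so that infinite branches of the input correspond exactly to good branches of the image; membership $G\in\add 11$ is the accompanying upper bound. Combining the two halves is then immediate: $G$ is $\add 11$-complete and is recognised by an unambiguous parity automaton, which is the assertion of the theorem.

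The one genuinely non-routine point, and the step I expect to be the main obstacle, is the upper bound $G\in\add 11$. The natural automaton $\autForLang G$ has index $(0,2)$ (Remark \ref{rem:indOfGL}), which places $G$ only in $\add 12$ by the general bound; membership in $\add 11$ therefore requires extra work, namely recognising $G$ by a B\"uchi / alternating $(1,2)$ automaton and appealing to Theorem \ref{th:compl_buchi} --- this is the role of Proposition \ref{prop:G-Buchi}. A self-contained alternative that avoids that proposition is the direct topological argument: $G$ is the projection onto the tree coordinate of the set of pairs $(t,\alpha)$ with $\alpha$ a good branch of $t$, and goodness is a Borel (in fact $\mult 02$) condition on $(t,\alpha)$, so $G$ is the projection of a Borel set and hence analytic.
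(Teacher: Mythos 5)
Your proposal is correct and matches the paper exactly: the theorem is obtained by taking $G$ as the witness and combining Proposition \ref{prop:G-unamb} (unambiguous recognition) with Proposition \ref{prop:G-compl} ($\add 11$-completeness), whose upper bound the paper likewise derives from Proposition \ref{prop:G-Buchi} together with Theorem \ref{th:compl_buchi}. Your self-contained alternative for the upper bound (projecting the Borel set of pairs $(t,\alpha)$ with $\alpha$ a good branch) is also valid and slightly more elementary, but it is an optional refinement rather than a different route.
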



Recall the theorem that is stated as Corollary 4.14 in \cite{finkel_nonborel_reg}:
\begin{theorem}[Finkel, Simonnet]\label{thm:finkel_unambBuchi}
 A tree language recognised by an unambiguous B\"uchi automaton is Borel.
\end{theorem}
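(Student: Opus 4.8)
The plan is to use Souslin's theorem. Since a B\"uchi automaton recognises an analytic set (Theorem~\ref{th:compl_buchi}), the language $L$ in question is already $\add 11$, so it suffices to prove that $L$ is also $\mult 11$: being then in $\amb 11=\Borel$, it is Borel.

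First I would set up a certificate relation. Let $\mathcal A$ be the unambiguous B\"uchi automaton with state set $Q$, and put
$$R=\left\{(t,\rho)\in\trees A\times\trees Q:\rho\text{ is an accepting run of }\mathcal A\text{ on }t\right\}.$$
The projection $\pi_1\colon(t,\rho)\mapsto t$ sends $R$ onto $L$, and unambiguity says precisely that every section $R_t=\{\rho:(t,\rho)\in R\}$ has at most one element, so $\pi_1{\restriction}R$ is injective. If $R$ were Borel, Luzin's theorem on injective Borel images (see \cite{kechris}) would immediately give that $L=\pi_1(R)$ is Borel, and we would be done.

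The point where unambiguity must really be used is that $R$ is a priori only $\mult 11$. Indeed, the ``legal run'' constraint (consistency with $\delta$ and the initial state) is closed, but the B\"uchi condition ``on every branch a state of rank $2$ occurs infinitely often'' is equivalent to saying that the set $F(\rho)=\{v:\rank(\rho(v))=1\}$ of rank\nobreakdash-$1$ nodes contains no infinite descending path; since the tree is binary this is well\nobreakdash-foundedness of $F(\rho)$, a genuinely $\mult 11$ condition (this is exactly what makes the set of well\nobreakdash-founded trees $\mult 11$\nobreakdash-complete). So instead of $R$ I would build a \emph{Borel} certificate relation $B$ with the same unique sections and the same projection $L$. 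The natural candidate is to adjoin to each accepting run its canonical (minimal) ordinal progress measure on the rank\nobreakdash-$1$ nodes, decreasing along child edges and witnessing well\nobreakdash-foundedness: the pair (run, measure) then satisfies a Borel condition, the measure is uniquely determined so sections stay singletons, and $\pi_1(B)=L$. Applying Luzin's theorem to the Borel set $B$ yields that $L$ is Borel, and hence $\amb 11$.

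The hard part is making $B$ genuinely Borel: the ordinal ranks witnessing well\nobreakdash-foundedness of $F(\rho)$ range a priori over all of $\omega_1$, which is exactly what keeps $R$ outside the Borel class. The crux is therefore to exploit unambiguity, together with the fixed finite structure of $\mathcal A$, to bound these ranks below a single countable ordinal, so that the progress measure is coded by an element of a fixed Polish space and the condition defining $B$ becomes Borel. I expect this uniform bound to be the main obstacle. An alternative route --- showing directly that $\compl L\in\add 11$ by existentially guessing, for each rejected tree, a single uniform reason why no run accepts --- looks harder, since failure of acceptance is quantified universally over all candidate runs.
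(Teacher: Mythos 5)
Your skeleton --- certificate relation $R$ of pairs (tree, accepting run), injectivity of the projection from unambiguity, then the Luzin--Souslin theorem that injective Borel images of Borel sets are Borel --- is exactly the standard argument for this result (the paper itself gives no proof; it cites Corollary 4.14 of \cite{finkel_nonborel_reg}, whose proof follows this route). But your proposal stalls at a ``hard part'' that does not exist, and the reason you think it exists is a genuine mathematical error. You correctly reformulate the B\"uchi condition as: the set $F(\rho)$ of rank-$1$ nodes contains no infinite descending path. However, $F(\rho)$ sits inside the \emph{binary} tree $\{l,r\}^\ast$, so it is a finitely branching forest, and by K\"onig's lemma a finitely branching forest has no infinite path if and only if every node has only finitely many descendants inside it. For each node $v$, ``the set of descendants of $v$ reachable from $v$ through rank-$1$ nodes only is finite'' is a $\Sigma^0_2$ condition on $\rho$ (a countable union of clopen conditions), so the acceptance condition is $\Pi^0_3$ and $R$ is Borel outright. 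The $\mult 11$-completeness of well-foundedness that you invoke is a phenomenon of \emph{infinitely branching} trees on $\omega$; it simply does not apply to subsets of the binary tree. Consequently the entire detour through canonical ordinal progress measures and bounds below $\omega_1$ --- which you leave unfinished, so the proposal as written is incomplete --- should be deleted: apply Luzin--Souslin directly to $R$ and the proof is done.

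It is worth seeing why this misclassification is precisely the content of the theorem. Unambiguity buys only injectivity of the projection; the B\"uchi index is what buys Borelness of the set of accepting runs, via finite branching and K\"onig's lemma. For a general parity index such as $(0,2)$, acceptance involves conditions of the form ``only finitely many rank-$1$ nodes on a branch'', which are genuinely $\mult 11$ and non-Borel, and there the theorem really fails: the paper's language $G$ is unambiguous, of index $(0,2)$, and $\add 11$-complete, hence non-Borel. (Indeed, running your own argument backwards: since $G$ is unambiguous and non-Borel, Luzin--Souslin forces the set of accepting runs of $\autForLang G$ to be non-Borel.) So the worry you raise is exactly the right worry for parity automata, where it is fatal; the theorem holds for B\"uchi automata precisely because, on finitely branching trees, that worry dissolves.
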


Thanks to this theorem we know that no unambiguous B\"uchi automaton recognises language $G$.
Proposition \ref{prop:G-compl} implies that $G$ is not a $\mult 11$ set, therefore, by Theorem \ref{th:compl_buchi}, it cannot be recognised by any (even alternating) automaton of index $(0,1)$.
As a result we obtain that the use of $3$ priorities is necessary for unambiguous automaton to recognise $G$.
On the other hand observe that:

\begin{prop}\label{prop:G-Buchi}
 Language $G$ is recognised by a nondeterministic B\"uchi automaton.
\end{prop}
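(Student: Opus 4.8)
The plan is to construct a nondeterministic automaton of index $(1,2)$ that guesses a good branch and verifies it, while accepting unconditionally everything that hangs off that branch. I would use two states $g_1,g_2$ (the subscript again indicating the rank) to follow the guessed branch, an all-accepting state $\top_2$ of rank $2$, and an all-rejecting state $\bot_1$ of rank $1$. The initial state is $g_1$, and the transitions from $g_\ast\in\{g_1,g_2\}$ are
$$
\trans{g_\ast}{a}{g_2}{\top_2} \;
\trans{g_\ast}{a}{\top_2}{g_1} \;
\trans{g_\ast}{b}{\bot_1}{\bot_1}
$$
together with the self-loops of $\top_2$ and $\bot_1$ on both letters. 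Reading an $a$, the automaton guesses whether the good branch descends to the left --- entering $g_2$ and thereby recording a left turn with rank $2$ --- or to the right --- entering $g_1$ with rank $1$ --- and in either case it sends the all-accepting state $\top_2$ into the subtree it abandons; reading a $b$ along the guessed branch forces it into $\bot_1$.

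Next I would check that this automaton recognises $G$. Since only ranks $1$ and $2$ occur, a branch of a run satisfies the parity condition exactly when rank $2$ appears on it infinitely often, which is the B\"uchi condition. On the branch carried by the $g$-states this happens precisely when $g_2$ is entered infinitely often, that is, when the guessed branch turns left infinitely often (condition \ref{cond:left}); moreover the run can avoid $\bot_1$ only if every label encountered along this branch is $a$ (condition \ref{cond:a}). Every branch that ever diverges from the guessed one enters $\top_2$ and remains there, hence is accepting. Thus a run is accepting if and only if the guessed branch is labelled only with $a$'s and turns left infinitely often, i.e. is good; consequently the automaton has an accepting run on $t$ exactly when $t$ has a good branch, so it recognises $G$, and by its index $(1,2)$ it is a B\"uchi automaton.

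The only delicate point, and the one I would treat with care, is the placement of ranks: rank $2$ must be emitted exactly at the left turns and rank $1$ everywhere else along the tracked branch, so that the B\"uchi condition restricted to that branch coincides with ``turns left infinitely often'', while $\top_2$ neutralises every subtree falling off the branch. Note that nondeterminism is genuinely used, in the two $a$-transitions from $g_\ast$; this is unavoidable, since by Theorem \ref{thm:finkel_unambBuchi} an \emph{unambiguous} B\"uchi automaton recognises only Borel languages, whereas $G$ is $\add 11$-complete.
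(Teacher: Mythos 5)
Your proof is correct and is essentially the paper's own construction: the paper obtains the B\"uchi automaton by deleting the $L$-part from $\autForLang{G}$ and replacing $l_1$ with $\top_2$ in the remaining transitions, which yields exactly your automaton with states $g_1,g_2,\top_2,\bot_1$ and the same ranks. Your explicit verification that the parity condition on the tracked branch coincides with ``turns left infinitely often'' matches the paper's (more tersely stated) reasoning.
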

\begin{proof}
 It suffices to remove $L$-part from the unambiguous automaton $\autForLang G$ presented in the proof of Proposition \ref{prop:G-unamb}, replacing $l_1$ with $\top_2$ in other transitions to obtain needed nondeterministic automaton. 
 The only purpose of that part was to make sure that we select left-most good branch. We do not need this if we do not care about the number of accepting runs.
\end{proof}

From this observation we obtain that the result by Finkel and Simonnet is tight in the sense that the following strengthening of Theorem \ref{thm:finkel_unambBuchi} does NOT hold:
``An unambiguous language that is recognised by some (possibly ambiguous) nondeterministic B\"uchi automaton, is Borel''.
Let us state it as follows:

\begin{corollary}
 There is a language of non-Borel topological complexity that is on one hand unambiguous, and on the other hand B\"uchi.
\end{corollary}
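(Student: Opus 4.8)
The plan is to take the language $G$ from this section as the single witness, since all three properties required by the statement have in effect already been established for it. The corollary is therefore obtained by assembling Propositions~\ref{prop:G-unamb}, \ref{prop:G-Buchi}, and~\ref{prop:G-compl}, rather than by any fresh construction.

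First I would dispatch the two ``positive'' halves of the statement directly. By Proposition~\ref{prop:G-unamb} the language $G$ is recognised by an unambiguous automaton, which gives the unambiguity. By Proposition~\ref{prop:G-Buchi} the same $G$ is recognised by a nondeterministic B\"uchi automaton, which gives the B\"uchi part. It is essential here that these two propositions speak about the \emph{same} set $G$, so that a single language realises both features at once; the ambiguous B\"uchi automaton of Proposition~\ref{prop:G-Buchi} is exactly the one obtained by stripping the $L$-part from the unambiguous automaton, so there is no discrepancy between the two recognised languages.

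It then remains to verify that $G$ is not Borel. Here I would invoke Proposition~\ref{prop:G-compl}, which states that $G$ is $\add 11$ complete, and combine it with the elementary fact that no $\add 11$-complete set can be Borel. The argument runs as follows: the class $\Borel$ is a topological complexity class, hence closed under preimages of continuous functions; since $G$ is $\add 11$-hard, every analytic set Wadge-reduces to $G$. If $G$ were Borel, then every $\add 11$ set would be a continuous preimage of a Borel set and hence Borel as well, forcing $\add 11 \subseteq \Borel \subseteq \mult 11$. This contradicts the strictness of the projective hierarchy (equivalently, the existence of a non-Borel analytic set such as $IF$). One may phrase the same step via Souslin's Theorem: a Borel $G$ would lie in $\amb 11 = \add 11 \cap \mult 11$, so in particular in $\mult 11$, and an $\add 11$-complete member of $\mult 11$ would again yield $\add 11 \subseteq \mult 11$, a contradiction.

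There is no genuine obstacle to overcome, as the mathematical content resides entirely in the preceding propositions; the corollary merely records their conjunction. The only point worth making explicit is the passage from $\add 11$-completeness to non-Borelness, which is the standard consequence of $\add 11 \neq \mult 11$ and the closure of complexity classes under continuous preimages noted above.
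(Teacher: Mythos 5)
Your proposal is correct and matches the paper's reasoning: the corollary is indeed just the conjunction of Propositions~\ref{prop:G-unamb}, \ref{prop:G-Buchi}, and \ref{prop:G-compl} applied to the single language $G$, with non-Borelness following from $\add 11$-completeness exactly as you argue. The paper treats this assembly as immediate (it states the corollary without a separate proof, relying on the surrounding discussion), so your only addition is to spell out the standard step from $\add 11$-hardness to non-Borelness, which is fine.
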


\section{Beyond Boolean Combinations}\label{sec:beyondBC}
In this section we construct an unambiguous tree language that is topologically harder than any set in 
$\sigma(\add 11)$.
For that we need to prove that the class of the sets that reduce to the constructed language contains all analytic sets, and is closed under complementation and countable unions. 


The construction of the language harder than boolean combinations of analytic sets goes through an automaton. The automaton, we will call it $\aut C$, uses as building blocks:
\begin{enumerate}
 \item the unambiguous automaton $\autForLang G$ recognising language $G$ from Section \ref{sec:G},
 \item the deterministic automaton $\autForLang L$ recognising $L:=\compl{G}$.
\end{enumerate}
Since we will work with larger alphabet than just $\{a,b\}$, we modify the automata (and recognised languages), such that they treat all letters except of $a$ like $b$.
After this modification, the languages recognised by the automata are complements even in case of larger alphabet.

The idea is that automaton $\aut C$ expects the part of a tree near the root to be shaped as a formula defining some set in the $\sigma$\nobreakdash-algebra $\sigma(\add 11)$.
Countable union is represented by the branch turning only left, where right descending subtrees correspond to subformulas.
Complementation nodes in a way disregard the left descendants.
In nodes corresponding to the atoms of the formula (analytic sets) the automaton expects subtrees from language $G$. The details follow.

The automaton $\aut C$ works over the alphabet $A=\{a,b,\vee,\neg\}$.
Apart from the states of the automata $\autForLang G$ and $\autForLang L$, it uses states: $N$, $P$, $N_\vee$, $P_\vee$, $\top_\vee$, $\top_S$, $\top$ ($P$ stands for 'Positive', $N$ for 'Negative'; $\top_\vee$ and $\top_S$ serve only verifying the shape of the formula). The initial state is $P$. The transitions of $\aut C$ are as follows: 
$$\begin{array}{ccccc}
\trans{P,P_\vee}{\vee}{\top_\vee}{P} &
\trans{P,P_\vee}{\vee}{P_\vee}{N} &
\trans{N,N_\vee}{\vee}{N_\vee}{N} &
\trans{\top_\vee,\top_S}{\vee}{\top_\vee}{\top_S} \\
\trans{N}{\neg}{\top}{P} &
\trans{P}{\neg}{\top}{N} &
\trans{\top_S}{\neg}{\top}{\top_S} &
\trans{\top_S}{a,b}{\top}{\top} 
\end{array}
$$
where $\top$ is an all-accepting state.
Additionally, when the automaton encounters letter $a$ or $b$ in state $P$, it starts to act as automaton $\autForLang G$. If it encounters one of these letters in state $N$, it starts to act as automaton $\autForLang L$.

States $N$, $P$, $P_\vee$, $\top_S$ have rank $1$, states $N_\vee$, $\top_\vee$, $\top$ rank $0$.

Note that automaton $\aut C$ is of index $(0,2)$ --- it is because of ranks used by $\autForLang G$.

We will use the notation $\autStartsWith{\aut{A}}{q}$ for the automaton $\aut A$ modified in such a way, that $q$ becomes an initial state. Let us prove the following:

\begin{prop}\label{prop:nonBC}
 For each set $S\subseteq X$ obtained from analytic sets by countable boolean operations the following holds:
 $$\left.\parbox[l]{0.8\linewidth}{
  there is a continuous function $f:X\to \trees A$, that simultaneously reduces $S$ to $L(\aut C)$, and $\compl S$ to  $L(\autStartsWith{\aut C}{N})$, i.e.:
  \begin{enumerate}
    \item $\aut C$ accepts $f(x)$ iff $x\in S$,
    \item $\autStartsWith{\aut C}{N}$ accepts $f(x)$ iff $x\notin S$.

Additionally we require:
    \item\label{cond:T_S} For all $x$, $\autStartsWith{\aut C}{\top_S}$ accepts $f(x)$.
  \end{enumerate}
  
 }\right\} (*)$$
\end{prop}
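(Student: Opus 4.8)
The plan is to induct on a well-founded derivation of $S$ within the $\sigma$-algebra $\sigma(\add 11)$. Recall that, via De Morgan's identity $\bigcap_i S_i=\compl{\bigcup_i\compl{S_i}}$, every $S\in\sigma(\add 11)$ is presented by a countably branching well-founded tree whose leaves are analytic sets and whose internal nodes are labelled by a countable union $\bigcup_{i\in\omega}$ or by complementation. I would define the reducing function $f$ by recursion along this derivation and establish the three clauses of $(*)$ simultaneously: carrying clauses 1 and 2 together is precisely what makes the complementation step swap the two start states, and clause 3 provides the accepting subtrees that the union step leaves uninspected.

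For the base case $S$ is analytic, so the $\add 11$-completeness of $G$ (Proposition \ref{prop:G-compl}) furnishes a continuous $g:X\to\trees{\{a,b\}}$ with $g(x)\in G\iff x\in S$, and I set $f:=g$. Reading the root letter from state $P$ runs $\autForLang G$, which accepts iff $g(x)\in G$, i.e. iff $x\in S$; from state $N$ it runs $\autForLang L$, accepting iff $g(x)\in\compl G$, i.e. iff $x\notin S$; and from $\top_S$ the same letter sends both children to the all-accepting $\top$, giving clause 3. For the complementation step $S=\compl{S'}$ with witness $f'$, I put $\neg$ at the root, $f'(x)$ on the right, and an arbitrary fixed tree on the left. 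Because on $\neg$ the left child always moves to $\top$ while the right child swaps $P$ and $N$, clauses 1 and 2 for $f$ are exactly clauses 2 and 1 of the hypothesis for $f'$; clause 3 follows from the $\neg$-transition out of $\top_S$ together with clause 3 for $f'$.

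The crux is the countable union $S=\bigcup_{i\in\omega}S_i$ with witnesses $f_i$ (pad with copies of $\emptyset$ if the family is finite), where I form $f(x)$ as an infinite left-turning spine of $\vee$'s whose $i$-th right subtree is $f_i(x)$. For clause 1, starting in $P$ the parity condition forbids staying forever in the rank-$1$ spine states $P,P_\vee$, so an accepting run must take the transition sending the spine to $\top_\vee$ exactly once, at some node $k$; this checks $S_0,\dots,S_{k-1}$ negatively through $N$, checks $S_k$ positively through $P$, and abandons every later disjunct to $\top_S$. By the inductive hypotheses the run is accepting iff $x\notin S_i$ for all $i<k$ and $x\in S_k$, i.e. iff $k$ is the least index with $x\in S_k$; such a $k$ exists iff $x\in S$, and it is unique, which also pins down a single accepting run. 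For clause 2 the spine transition out of $N$ is deterministic, producing $N,N_\vee,N_\vee,\dots$ (rank $1$ once, then $0$) while every disjunct is read in $N$, so $f(x)$ is accepted iff $x\notin S_i$ for all $i$, i.e. iff $x\notin S$. Clause 3 is the analogous deterministic computation through $\top_\vee$ and $\top_S$, using clause 3 of each $f_i$.

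It remains to note continuity: the $\vee$/$\neg$ skeleton is fixed by the derivation of $S$, the atom subtrees are the continuous images under the maps $g$, and in the union step $f_i(x)$ sits at the fixed node $l^i r$, so the label of $f(x)$ at depth $d$ depends on only $f_0,\dots,f_d$ read at bounded depth, hence on a finite prefix of $x$. I expect the union step to be the main obstacle: one must verify that the parity bookkeeping genuinely forces the spine to be abandoned exactly once and at the least true disjunct, and that the negatively checked earlier disjuncts, the one positively checked disjunct, and the $\top_S$-checked later disjuncts combine through the three inductive clauses to capture membership in $\bigcup_{i\in\omega}S_i$ exactly.
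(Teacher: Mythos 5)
Your proposal is correct and follows essentially the same route as the paper: the base case via the $\add 11$-completeness of $G$, complementation via a root $\neg$ whose left child is discarded into $\top$, and countable unions via an infinite left spine of $\vee$'s with $f_i(x)$ as the $i$-th right subtree, with the three clauses of $(*)$ carried through the induction exactly as in the paper. If anything, your union step is more explicit than the paper's (which just appeals to ``the shape of the transitions over $\vee$''), since your parity argument that an accepting run from $P$ must leave the rank-$1$ spine exactly once, necessarily at the least true disjunct, is precisely what justifies the ``only if'' directions of clauses 1 and 2.
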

\begin{proof}
 We need to prove that the class of sets for which the property $(*)$ holds contains all analytic sets, and is closed under complementation and countable unions.

 \namedpar{(analytic sets)}
 Take an arbitrary analytic set $S\subseteq X$. Since $G$ is analytic-complete, there exists a continuous function $f:X\to \trees{\{a,b\}}$ reducing $S$ to $G$. 
 We will show that the same function is actually a needed simultaneous reduction to $L(\aut{C})$ and $L(\autStartsWith{\aut C}{N})$.
 
 Note that among the trees over the alphabet $\{a,b\}$, $\aut{C}$ accepts exactly the trees from $G$. Therefore:
 $$f(x)\in L(\aut{C}) \quad\iff\quad f(x)\in G \quad\iff\quad x\in S$$
 Now we have to show that $\aut{C}$ accepts $f(x)$ from state $N$ if and only if $x\notin S$.
 Recall that from state $N$, on trees over $\{a,b\}$, $\aut{C}$ acts exactly like automaton $\autForLang L$. 
 Since $\autForLang L$ recognises the complement of $G$, we obtain the statement.
 
 From state $\top_S$ each tree with $a$ or $b$ in the root is accepted, so condition \ref{cond:T_S} also holds for $f$.

 \namedpar{(closure under the complement)}
 Take any set $S\subseteq X$ for which $(*)$ holds, and let $f:X\to \trees A$ be an appropriate reducing function.
 We construct a reduction $\compl{f}:X\to \trees A$ of $\compl S$ to $L(\aut{C})$ by putting:

$\compl{f}(x) := $ \parbox{0.3\textwidth}{\includegraphics[width=0.2\textwidth]{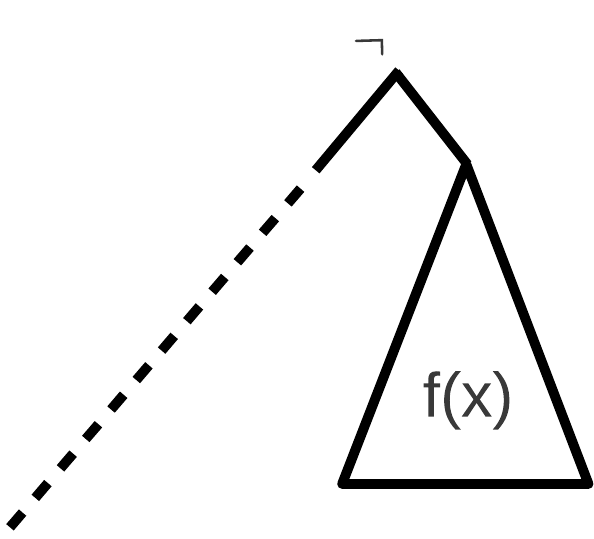}} \hspace{.5cm} \parbox{0.5\textwidth}{(the rest of the leftmost branch does not matter)}

 By the inductive assumption and by the shape of the transitions over letter $\neg$, we obtain that $\compl{f}(x)$ is accepted from state $P$ if and only if $f(x)$ is accepted from state $N$, that $\compl{f}(x)$ is accepted from state $N$ if and only if $f(x)$ is accepted from state $P$, and that $\compl{f}(x)$ is accepted from state $\top_S$ if and only if $f(x)$ is accepted from state $\top_S$.

 \namedpar{(closure under unions)}
 Take a set $S\subseteq X$ such that $S=\bigcup_{i=0}^\infty S_i$, where each $S_i$ has property $(*)$.
 For each $i$, let $f_i:X\to \trees A$ be an appropriate reduction as in $(*)$ for set $S_i$.
 We build a reduction for set $S$ as follows:
 
\begin{center}
 $f(x) := $ \parbox{0.3\textwidth}{\includegraphics[width=0.35\textwidth]{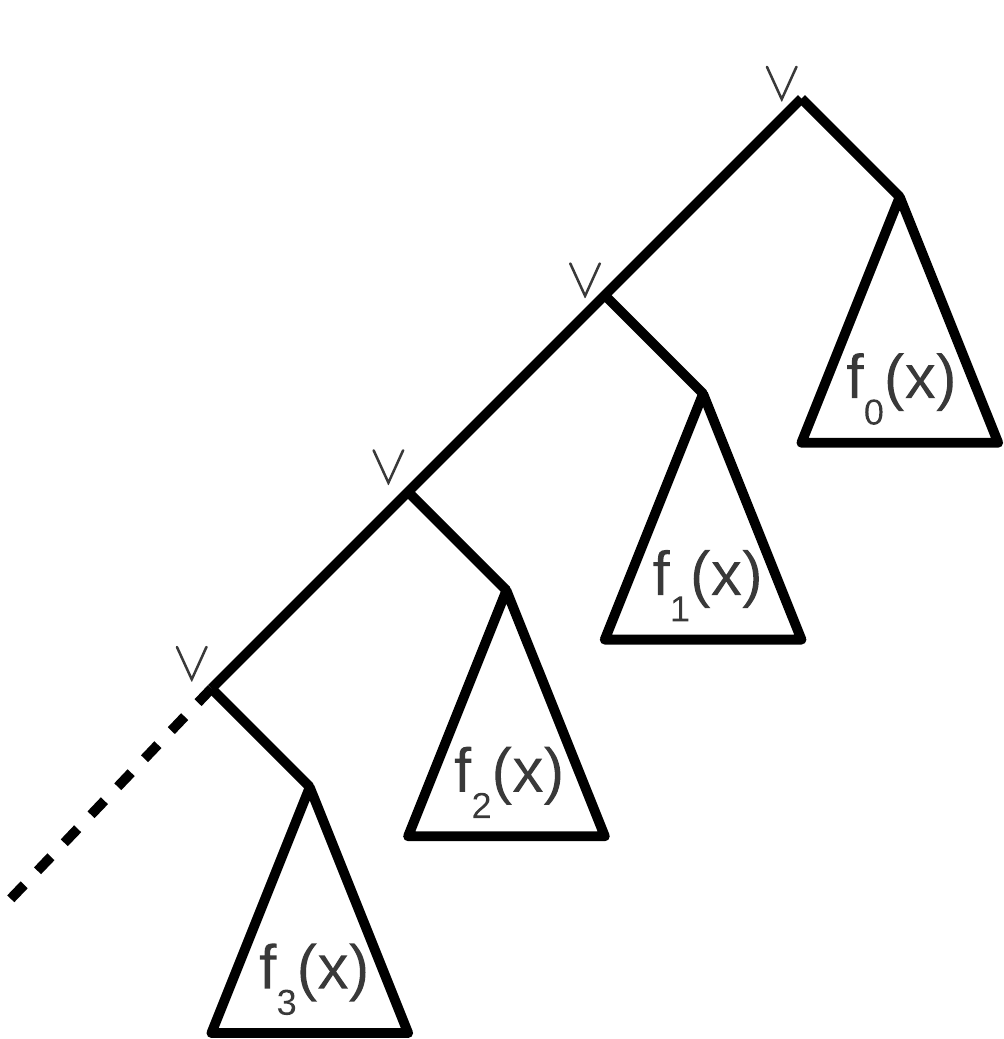}} 
\end{center}
 
 Function $f$ is clearly continuous. We now prove that it is a needed simultaneous reduction.
 
 Let $x\in S$, and let $i_0$ be the least $i$ such that $x\in S_i$.
 Then, from the inductive assumption: $f_{i_0}(x)$ is accepted from state $P$; for each $i<i_0$, $f_i(x)$ is accepted from state $N$; for each $i>i_0$, $f_i(x)$ is accepted from state $\top_S$.
 Then, by the shape of the transitions over label $\vee$, $f(x)$ is accepted from state $P$.

 Let now $x\notin S$. Then $x\notin S_i$ for each $i$.
 Therefore, for each $i$, $f_i(x)$ is accepted from state $N$.
 Then $f(x)$ is accepted from state $N$, by the shape of transitions over $\vee$.
 
 If each $f_i(x)$ is accepted from state $\top_S$, then by the shape of the transitions from $\top_S$ and $\top_\vee$ over $\vee$, $f(x)$ is also accepted from state $\top_S$.
\end{proof}

In particular we have proven:
\begin{corollary}
 Each set from $\sigma(\add 11)$ continuously reduces to $L(\aut C)$.
\end{corollary}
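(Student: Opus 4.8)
The plan is to read this corollary directly off Proposition~\ref{prop:nonBC}; essentially all of the content already lives in that proposition. Fix a topological space $X$ and let $\mathcal{C}$ denote the class of all subsets $S\subseteq X$ for which property $(*)$ holds. The proof of Proposition~\ref{prop:nonBC} establishes exactly three facts about $\mathcal{C}$: it contains every analytic subset of $X$, it is closed under complementation, and it is closed under countable unions.

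The only step that is not a pure restatement is the observation that these three properties already force $\mathcal{C}$ to be a $\sigma$-algebra. Indeed, the sole $\sigma$-algebra axiom not explicitly among them is closure under countable intersections, and this follows from the other two by De Morgan's laws, since $\bigcap_i S_i = \compl{\bigcup_i \compl{S_i}}$ while $\mathcal{C}$ is closed under both complementation and countable union. Hence $\mathcal{C}$ is a $\sigma$-algebra containing $\add 11$.

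Since $\sigma(\add 11)$ is by definition the least $\sigma$-algebra containing the analytic sets, minimality yields $\sigma(\add 11)\subseteq\mathcal{C}$. To finish, I would note that for any $S\in\sigma(\add 11)$ the function $f$ furnished by $(*)$ satisfies its condition~1, namely that $\aut C$ accepts $f(x)$ if and only if $x\in S$; this is precisely the assertion that the continuous map $f$ reduces $S$ to $L(\aut C)$, i.e. $S=f^{-1}(L(\aut C))$. As $X$ was arbitrary, every set from $\sigma(\add 11)$ continuously reduces to $L(\aut C)$. I expect no genuine obstacle here: the induction is carried out in Proposition~\ref{prop:nonBC}, and the only thing to supply is the De Morgan remark that upgrades the listed closure properties to full $\sigma$-algebra status, so that the minimality of $\sigma(\add 11)$ can be invoked.
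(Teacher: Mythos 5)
Your proof is correct and takes essentially the same route as the paper: the paper states this corollary as an immediate consequence of Proposition \ref{prop:nonBC} (``In particular we have proven''), whose statement already covers every $S\in\sigma(\add 11)$, so condition~1 of $(*)$ is precisely the claimed reduction. Your De Morgan remark merely makes explicit why the closure properties established in that proposition's proof (complementation and countable unions) suffice to exhaust $\sigma(\add 11)$, a point the paper leaves implicit.
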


We use one more fact from descriptive set theory. It can be proven using Theorems 1E.3, 1D.2, 1D.3 from \cite{moschovakis} and the fact that $\sigma(\add 11)$ is closed under the complement 
(the details can be found in Appendix of the full version of this paper).

\begin{prop}\label{prop:nocomplete}
 For each $n\ge 1$, there is no $\sigma(\add 1n)$\nobreakdash-complete set.
\end{prop}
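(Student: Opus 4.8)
The plan is to prove a single general topological lemma and then instantiate it: \emph{no topological complexity class that is closed under complementation possesses a complete set}. This is precisely the ``general topological fact'' whose proof the paper defers to the appendix. To set it up for $\sigma(\add 1n)$, first I would observe that $\sigma(\add 1n)$ is a self-dual topological complexity class. It is closed under complementation by the very definition of a $\sigma$-algebra, and it is closed under continuous preimages because, for any continuous $f$, the family $\{A : f^{-1}(A)\in\sigma(\add 1n)\}$ is a $\sigma$-algebra containing the topological complexity class $\add 1n$, hence contains all of $\sigma(\add 1n)$. Thus the lemma applies uniformly for every $n\ge 1$.

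To prove the lemma I would argue by contradiction: suppose $M\subseteq Y$ is $\sigma(\add 1n)$-complete. The first step passes from the \emph{complete} set $M$ to a \emph{universal} set. Using the parametrization of continuous functions $\Cantor\to Y$ (see \cite{moschovakis}, Theorems~1D.2 and 1D.3: a continuous function is coded by a single real through its action on finite prefixes, yielding a continuous ``application'' map $\Phi:\Cantor\times\Cantor\to Y$ whose sections include every continuous $\Cantor\to Y$), I would put $V=\Phi^{-1}(M)$. Then $V\in\sigma(\add 1n)$, since $M\in\sigma(\add 1n)$ and $\sigma(\add 1n)$ is closed under continuous preimages; and $V$ is universal for the $\sigma(\add 1n)$-subsets of $\Cantor$, because for any such $K$ completeness gives a continuous reduction $g$ of $K$ to $M$, and $g=\Phi(p,\cdot)$ for some code $p$, so $K=V_p=\{x:(p,x)\in V\}$.

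The second step is the diagonal argument, which is the abstract content of \cite{moschovakis}, 1E.3, and this is exactly where self-duality is indispensable. Consider $D=\{x\in\Cantor:(x,x)\notin V\}$. Since $x\mapsto(x,x)$ is continuous, $\{x:(x,x)\in V\}\in\sigma(\add 1n)$, and because $\sigma(\add 1n)$ is closed under complementation, $D\in\sigma(\add 1n)$ as well. By universality $D=V_{p_0}$ for some $p_0$, whence $p_0\in D\iff(p_0,p_0)\notin V\iff p_0\notin V_{p_0}=D$, a contradiction. This establishes the lemma and hence the proposition.

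The hard part will be the first step: extracting a genuine universal set in the class from the mere existence of a complete set. The only real subtlety is that not every real codes a \emph{total} continuous function, but this is harmless, since the argument needs only that \emph{some} code realizes each reduction $g$, not that all codes are total; with $V=\Phi^{-1}(M)$ defined directly this gives a set that is automatically both in $\sigma(\add 1n)$ and universal. The index $n$ plays no essential role: the only inputs are that $\add 1n$ is a topological complexity class admitting the parametrization of continuous functions, which is why I would lean on the general Moschovakis machinery rather than on anything special to analytic sets. Finally, closure under complement is genuinely necessary, as it must be: $\add 1n$ and $\mult 1n$ are \emph{not} self-dual and do have complete sets, so any correct proof has to collapse exactly when self-duality fails, and ours does so at the diagonal step.
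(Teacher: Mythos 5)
Your two-step skeleton --- manufacture a universal set from the hypothetical complete set, then diagonalise using closure under complement --- is the right one, and your diagonal step is exactly correct. The problem is the headline lemma you reduce everything to: \emph{``no topological complexity class closed under complementation has a complete set''} is false as stated, and the step of your argument that fails on the counterexample is precisely the one you declared harmless. Counterexample: the class of clopen sets. It is closed under continuous preimages and under complementation, yet any nonempty proper clopen $M\subseteq\Cantor$ is complete for it (send $K$ to a fixed point of $M$ and its complement to a fixed point of $\compl{M}$; this map is continuous when $K$ is clopen). So self-duality plus closure under continuous preimages cannot suffice, and any correct proof must invoke more. Where your write-up breaks is the evaluation map: there is \emph{no} total continuous $\Phi:\Cantor\times\Cantor\to Y$ whose sections exhaust the continuous functions $\Cantor\to Y$. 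Joint continuity on a compact product forces $\left\{\Phi(p,\cdot):p\in\Cantor\right\}$ to be a compact, hence equicontinuous, family in $C(\Cantor,Y)$, and the continuous functions are not equicontinuous (for $Y=\Cantor$ take the shifts $f_n(x)(i)=x(i+n)$). The evaluation map is unavoidably partial: it is continuous only on $D_0\times\Cantor$, where $D_0$ is the set of codes of total monotone prefix maps, a $\mult 02$ (hence Borel) set. Consequently $V=\Phi^{-1}(M)$ lives inside $D_0\times\Cantor$, and $V\in\sigma(\add 1n)$ is \emph{not} automatic: continuity of $\Phi$ on its domain only places $V$ in the class relativised to $D_0\times\Cantor$, and to get $V\in\sigma(\add 1n)$ in the full product space you must use that $\sigma(\add 1n)$ contains all Borel sets and is closed under countable intersections (e.g.\ via the Borel graph of $\Phi$ and closure of $\add 1n$ under Borel substitution and projection). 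This verification is routine for $\sigma(\add 1n)$ --- and impossible for the clopen class, which is exactly why the counterexample does not contradict the repaired argument.

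With that repair inserted (and the general lemma restated with hypotheses strong enough to carry it out, which $\sigma(\add 1n)$ satisfies), your proof is correct and establishes the Proposition. It is, however, apparently a different decomposition from the paper's. The paper's own proof sits in an appendix that is not part of this source, but the three theorems it cites from \cite{moschovakis} are parametrization (universal-set) theorems together with the diagonal Hierarchy Lemma --- none of them is about coding continuous functions by points --- which indicates a proof along these lines: write $\sigma(\add 1n)$ as an increasing union of $\omega_1$ levels, each having a universal set and closed under continuous preimages; a complete set would lie in some level and, by completeness, would collapse the whole $\sigma$-algebra to that level; the Hierarchy Lemma then contradicts closure under complement. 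Your route buys a shorter argument with no transfinite induction, at the price of handling partial evaluation maps correctly; the paper's route keeps all evaluation issues hidden inside standard parametrization theorems, at the price of the level-by-level construction. Both proofs end with the same diagonal, and both spend self-duality exactly once, where it must be spent.
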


\begin{corollary}
 Set $L(\aut C)$ is in $\amb 12 \setminus \sigma(\add 11)$.
\end{corollary}

Let now $\setInShaped\subseteq\trees{A}$ be the language of trees corresponding to improperly shaped formulas,
i.e. the set of trees that:
\begin{enumerate}
 \item\label{def:W_notWF} contain a branch labelled only with $\vee$ and $\neg$, that turns right infinitely many times, and turns right after each occurrence of $\neg$ (formula is not well-founded),
 \item[or]
 \item\label{def:W_veeStoped} contain a path from the root that turns right after each occurrence of $\neg$, ends in a left descendant node $v$ that is labelled with something different than $\vee$,
a parent of $v$ is labelled with $\vee$, and all nodes above $v$ are labelled with $\vee$ or $\neg$.
\end{enumerate}

\begin{lemma}\label{l:disjoint}
 Sets $L(\aut C)$, $L(\autStartsWith{\aut{C}}{N})$ and $\setInShaped$ constitute a partition of $\trees{A}$.
 In particular $L(\aut C)$ and $L(\autStartsWith{\aut{C}}{N})$ are disjoint.
\end{lemma}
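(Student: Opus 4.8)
The plan is to reduce the three-way split to two facts about $\aut C$: that an ill-shaped tree (one in $\setInShaped$) is rejected from both $P$ and $N$, and that a well-shaped tree (one outside $\setInShaped$) is accepted from exactly one of the states $P$, $N$. Granting these, disjointness and covering follow formally. If $t$ were accepted from both $P$ and $N$ it would be well-shaped (by the first fact, taken contrapositively) and yet accepted from a unique one of them (by the second), which is absurd; the first fact also keeps $\setInShaped$ out of $L(\aut C)$ and of $L(\autStartsWith{\aut C}{N})$; and every well-shaped tree falls into one of these two languages while every ill-shaped one falls into $\setInShaped$, so the three sets cover $\trees A$.

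First I would record an operational reading of $\setInShaped$. The sub-automaton on $\{\top_S,\top_\vee,\top\}$ is deterministic and merely inspects the shape of a formula: from $\top_S$ a letter $a$ or $b$ leads to the all-accepting $\top$, a $\neg$ sends the relevant right child back to $\top_S$ (discarding the left child to $\top$), and a $\vee$ sends the left child to the spine state $\top_\vee$ and the right child to $\top_S$, while $\top_\vee$ survives only by descending left along a $\vee$-spine and has no transition on $\neg,a,b$. Since $\top_S$ has rank $1$ and $\top_\vee,\top$ have rank $0$, a branch of this run is rejected exactly when it stays among operators and meets $\top_S$ infinitely often, i.e.\ turns right infinitely often (condition~\ref{def:W_notWF}), and the run gets stuck exactly in the broken-spine situation of condition~\ref{def:W_veeStoped}. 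Hence $\autStartsWith{\aut C}{\top_S}$ accepts $t$ iff $t\notin\setInShaped$.

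Using the same rank bookkeeping I would prove the first fact. Take a witnessing branch as in condition~\ref{def:W_notWF}: it reads only $\vee,\neg$ and turns right at every $\neg$. Along it no run can reach $\top$ (that state is produced only as the discarded left child of a $\neg$, which the branch avoids) nor enter $\autForLang G$ or $\autForLang L$ (no $a,b$ is read), so the run stays in $\{P,N,P_\vee,N_\vee,\top_\vee,\top_S\}$, where rank $0$ occurs only at $N_\vee,\top_\vee$ and these persist only by going left. Every right turn lands in a rank-$1$ state, so a branch turning right infinitely often carries rank $1$ infinitely often and every run is rejecting on it. For condition~\ref{def:W_veeStoped}, tracing the path to the offending node $v$ (again through operators, right at each $\neg$) forces, in any run, a spine state $\top_\vee$, $P_\vee$ or $N_\vee$ at $v$; as $v$ is not labelled $\vee$ no transition applies, so no run exists. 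Either way $t$ is accepted from neither $P$ nor $N$.

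It remains to treat a well-shaped $t$ by well-founded induction on its formula structure, which is legitimate precisely because $t\notin\setInShaped$ forbids both infinite descent into subformulas and broken $\vee$-spines, and because each subformula again avoids such branches and is therefore well-shaped. At an atom (root $a$ or $b$) the automaton runs $\autForLang G$ from $P$ and $\autForLang L$ from $N$, so by Proposition~\ref{prop:G-unamb} exactly one accepts, as $L=\compl G$ even over the larger alphabet. At a $\neg$-node the transitions swap $P$ and $N$ on the right child (the left child is the harmless $\top$), so the claim transfers from the single subformula. The decisive case is $\vee$: from $N$ the run is forced down the rank-$0$ spine state $N_\vee$ and feeds every right subtree $\phi_i$ to $N$, hence accepts iff every $\phi_i$ is accepted from $N$; from $P$ an accepting run must eventually switch the spine to the rank-$0$ state $\top_\vee$ (staying in $P_\vee$ forever has rank $1$ and is rejecting), committing at some index $i_0$ whose subtree goes to $P$, sending the earlier subtrees to $N$ and the later ones to the shape-checker $\top_S$. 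Invoking the $\top_S$-characterization (the later subtrees are well-shaped, hence accepted) and the induction hypothesis on each $\phi_i$, such a run exists iff some $\phi_i$ is accepted from $P$ (take the least such $i_0$), which is exactly the negation of ``every $\phi_i$ accepted from $N$''; so $P$- and $N$-acceptance stay complementary. The main obstacle is precisely this $\vee$-case, where the existential choice of a satisfied disjunct must be reconciled with the parity condition along the infinite left spine and with an infinitely branching, possibly transfinite, induction; the $\neg$- and atom-cases and the two shape conditions are comparatively routine rank computations.
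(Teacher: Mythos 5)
Your proof is correct and follows essentially the same route as the paper's: ill-shaped trees are rejected from both $P$ and $N$ by the same rank/stuck-run analysis, and well-shaped trees are shown to be accepted from exactly one of $P$, $N$ by well-founded induction on subformula structure (which the paper formalises by defining an ordinal rank), with the identical case split at atoms, $\neg$, and $\vee$. Your explicit characterisation that $\autStartsWith{\aut C}{\top_S}$ accepts precisely the well-shaped trees is a welcome addition: the paper uses this fact only implicitly when building the accepting run from $P$ in the $\vee$-case.
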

\begin{proof}
 First we show that if $t\in\setInShaped$, then $t$ is not accepted from neither of states $P$, $N$. 
Assume first, that $t$ has a branch $\rho$ as in point \ref{def:W_notWF} of the definition of $\setInShaped$. Consider a run of the automaton starting from $P$ or $N$. Branch $\rho$ is labelled with states $N$, $P$, $N_\vee$, $P_\vee$, $\top_\vee$, $\top_S$ (ranks $0$ and $1$) in this run, and after each turn to the right there is state $N$, $P$, or $\top_S$ (each of rank $1$). The highest rank occurring infinitely often on branch $\rho$ is $1$, so the run is not accepting.
Now assume that $t$ has a path as in point \ref{def:W_veeStoped} of the definition of $\setInShaped$. In a run starting from $P$ or $N$, the node $v$, as a left descendant of a node labelled with $\vee$, always obtains a state label of $N_\vee$, $P_\vee$, or $\top_\vee$. Since from these states there are only transitions over letter $\vee$ and $v$ is labelled with something different, the run gets stuck.
 
Let us now consider well-shaped trees, i.e. trees outside $\setInShaped$. Recall that prefixes of such trees represent well-founded formulas.
Using this fact we define a rank on such trees. For a tree $t$, $\rank(t)$ is an ordinal defined by the rules:
\begin{itemize}
 \item If the root of $t$ is labelled with $a$ or $b$, then $\rank(t)=0$.
 \item If the root is labelled with $\neg$, then $\rank(t)=\rank(t_r)+1$, where $t_r$ is a subtree of $t$ rooted in right descendant node of the root.
 \item If the whole left-most branch is labelled with $\vee$, then $\rank(t)=sup\left\{\rank(t_{l^nr})+1:n\ge 0\right\}$, where $t_{l^nr}$ are subtrees diverging from the path to the right.
\end{itemize}
Note that the above set of rules allows to define the rank for all well-shaped trees.

Now we prove by the transfinite induction on $\rank(t)$ that each well-shaped tree $t$ is either accepted from state $P$ or from $N$, but not from both.

If $\rank(t)=0$, the tree has $a$ or $b$ in the root. It is, then, accepted from state $P$ if and only if $t\in G$, and is accepted from state $N$ if and only if $t\in L$.
Since $L$ is the complement of $G$, we are done with this case.

If $\rank(t)>0$, then the root is labelled with $\vee$ or $\neg$. If it is labelled with $\neg$, then $t$ is accepted from state $P$ if and only if $t_r$ is accepted from state $N$, and vice versa.
Since $t_r$ is of lower rank than $t$, it suffices to use the inductive assumption for $t_r$.

If $t$ has label $\vee$ in the root, then, since it is well-shaped, it has label $\vee$ at whole left-most branch. All subtrees diverging to the right from the left-most branch have lower rank then $t$, so we can use the inductive assumption for them. Then: 
\begin{enumerate}
 \item[either] at least one of the subtrees is accepted from state $P$ and not from $N$ --- in this case $t$ is accepted from $P$, but not from $N$,
 \item[or] each of them is accepted from state $N$ and not from $P$ --- then tree $t$ is accepted from state $N$, but not from $P$.
\end{enumerate}
\end{proof}
\begin{prop}
 Automaton $\aut C$ is unambiguous.
\end{prop}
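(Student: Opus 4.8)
The plan is to isolate the only source of nondeterminism in $\aut C$ and to show that in any accepting run every nondeterministic choice is forced. Reading off the transition table, $\aut C$ behaves deterministically everywhere except when it reads the letter $\vee$ in state $P$ or $P_\vee$: there it may either send the left child to $\top_\vee$ and the right child to $P$, or send the left child to $P_\vee$ and the right child to $N$. Everywhere else the next transition is uniquely determined: from $N$ and $N_\vee$ the transitions are functional; the states $\top,\top_\vee,\top_S$ form a deterministic, all\nobreakdash-accepting fragment; on a letter $a$ or $b$ read in state $P$ the automaton runs $\autForLang G$, which is unambiguous, and in state $N$ it runs the deterministic automaton $\autForLang L$ (Proposition~\ref{prop:G-unamb}). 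Thus a run is fully determined once we know, at each $\vee$ met in state $P$ or $P_\vee$, which of the two transitions is used.

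I would prove the following strengthening by transfinite induction on the ordinal $\rank(t)$ introduced in the proof of Lemma~\ref{l:disjoint}: for every well\nobreakdash-shaped tree $t$, each of $\autStartsWith{\aut C}{P}$ and $\autStartsWith{\aut C}{N}$ has at most one accepting run on $t$. Trees in $\setInShaped$ require no argument, since by Lemma~\ref{l:disjoint} they are accepted from neither $P$ nor $N$; as the initial state of $\aut C$ is $P$, unambiguity of $\aut C$ follows at once from the claim. For $\rank(t)=0$ the root carries $a$ or $b$, and the two cases are exactly the unambiguity of $\autForLang G$ and the determinism of $\autForLang L$. If the root is labelled $\neg$, the transition from $P$ (resp.\ $N$) is forced, the left subtree is read deterministically from $\top$, and the right subtree $t_r$ has smaller rank, so the induction hypothesis applied to $t_r$ gives at most one accepting run.

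I expect the $\vee$ case to be the crux. Here $t$ is well\nobreakdash-shaped, so its whole left\nobreakdash-most branch carries $\vee$ and every right subtree $t_{l^n r}$ has smaller rank. From $N$ the run is forced: the branch is labelled $N_\vee$ and each $t_{l^n r}$ is read from $N$, so the induction hypothesis yields at most one accepting run. From $P$, an accepting run must at some finite index $i$ use the transition sending the left child to $\top_\vee$ and the right child to $P$; otherwise the branch stays in $P,P_\vee,P_\vee,\dots$, carries rank $1$ forever, and rejects. This choice forces $t_{l^i r}$ to be read from $P$, every earlier $t_{l^n r}$ (with $n<i$) from $N$, the branch below to pass into the deterministic fragment on $\top_\vee,\top_S$, and every later $t_{l^n r}$ (with $n>i$) to be read from $\top_S$. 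The key point is that $i$ is unique: if two accepting runs used indices $i<j$, then in the first run $t_{l^i r}$ is accepted from $P$ while in the second it is an earlier disjunct and hence accepted from $N$, so $t_{l^i r}\in L(\aut C)\cap L(\autStartsWith{\aut C}{N})$, contradicting the disjointness proved in Lemma~\ref{l:disjoint}. Once $i$ is pinned down, the earlier disjuncts and the distinguished one are handled by the induction hypothesis and the later ones by the deterministic $\top_S$\nobreakdash-fragment, so the accepting run from $P$ is unique too, closing the induction.
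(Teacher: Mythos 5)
Your proof is correct and rests on exactly the same two pillars as the paper's own argument: the nondeterminism of $\aut C$ is confined to reading $\vee$ in state $P$ or $P_\vee$, and the disjointness of $L(\aut C)$ and $L(\autStartsWith{\aut C}{N})$ from Lemma \ref{l:disjoint} forces which of the two transitions an accepting run can use. The paper compresses this into a three-line local argument (at a node where two accepting runs would diverge, the right subtree would have to be accepted from both $P$ and $N$, contradicting the lemma), while your transfinite induction on $\rank(t)$ is a heavier but equally valid packaging of the same idea.
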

\begin{proof}
 The only nondeterminism in automaton $\aut C$ occurs in state $P$ (or $P_\vee$) and label $\vee$.
 Two transitions are possible from this configuration --- one assigns state $P$ to the right child, the other one assigns state $N$ to the right child.
 By Lemma \ref{l:disjoint}, only one of the transitions can be used in an accepting run from a given node in a given tree.
\end{proof}

Recall that automaton $\aut C$ is of index $(0,2)$. 
Since $L(\aut C)$ is neither a $\add 11$ set, nor a $\mult 11$ set, by Theorem \ref{th:compl_buchi}, we have:
\begin{remark}
 Language $L(\aut C)$ cannot be recognised by any alternating automaton using $2$ ranks (neither $(0,1)$, nor $(1,2)$).
\end{remark}

\section{The Example Is Strongly Unambiguous}\label{sec:StronglyUnamb}
A language $M$ is called \emph{strongly unambiguous} if both $M$ and $\compl{M}$ are recognised by unambiguous automata. 
Because of the apparent asymmetry of the class of unambiguous languages, strong unambiguity seems to be an important notion.
Some arguments for its importance were given by Colcombet in \cite{colc_determinism}.

In this section we prove that the example from Section \ref{sec:beyondBC} is actually strongly unambiguous.

\begin{prop}
 The complement of language $L(\aut C)$ is recognised by an unambiguous tree automaton.
\end{prop}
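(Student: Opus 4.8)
The plan is to use the partition established in Lemma~\ref{l:disjoint}. Since $L(\aut C)$, $L(\autStartsWith{\aut C}{N})$ and $\setInShaped$ are pairwise disjoint and cover $\trees A$, we have
$$\compl{L(\aut C)} = L(\autStartsWith{\aut C}{N}) \cup \setInShaped,$$
and this union is disjoint. I would therefore recognise $\compl{L(\aut C)}$ by an automaton that, at the root, nondeterministically commits either to witnessing membership in $L(\autStartsWith{\aut C}{N})$ or to witnessing membership in $\setInShaped$, running the two sub-automata on disjoint copies of their state spaces. The key observation is that a root-level nondeterministic union of two \emph{unambiguous} automata recognising \emph{disjoint} languages is again unambiguous: every accepting run lies entirely inside one component, hence witnesses membership in exactly one of the two disjoint languages, and within that component the run is unique. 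Thus it suffices to exhibit unambiguous automata for $L(\autStartsWith{\aut C}{N})$ and for $\setInShaped$ separately.

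That $L(\autStartsWith{\aut C}{N})$ is unambiguous follows from the very same argument that shows $\aut C$ unambiguous: the only nondeterministic choice occurs in states $P$ and $P_\vee$ over the letter $\vee$, and this is untouched by changing the initial state from $P$ to $N$. By Lemma~\ref{l:disjoint}, at most one of the two competing transitions can occur in an accepting run at any node, so there is at most one accepting run.

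The main obstacle is the unambiguous automaton for $\setInShaped$. I would first give a uniform description of its two defining conditions as the existence of a single \emph{defect branch}: an infinite branch whose initial segment stays inside $\vee$- and $\neg$-labelled nodes and turns right after every $\neg$, and which either continues forever turning right infinitely often (condition \ref{def:W_notWF}), or leaves this segment by turning left from a $\vee$-node onto a node not labelled $\vee$ (condition \ref{def:W_veeStoped}). One checks that $t\in\setInShaped$ precisely when $t$ has a defect branch. The construction then mirrors that of $\autForLang G$ in Proposition~\ref{prop:G-unamb}: guess the \emph{left-most} defect branch, enforce the first condition (infinitely many right turns) by the parity condition, certify the second by a transition into an all-accepting state, and, each time the guessed branch turns right, run a deterministic \emph{shape-checking} automaton --- the analogue of $\autForLang L$ --- on the abandoned left subtree to verify that it contains no defect branch. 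Because the shape-checker accepts exactly the well-shaped subtrees (a deterministically checkable condition on the formula paths), the witnessing states are forced onto a uniquely determined branch, giving at most one accepting run.

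The hard part is justifying the left-most defect branch exactly as in Lemma~\ref{lemma:leftmost}: one must prove that whenever a defect branch exists there is a left-most one, and that this canonical branch still satisfies the defining condition --- in particular that, in the condition \ref{def:W_notWF} case, preferring the left descendant whenever a defect branch passes through it does not destroy the requirement of turning right infinitely often. This is the direct analogue of showing that the left-most good branch of $G$ still turns left infinitely often, and the same contradiction argument applies. Verifying that the accompanying shape-checker is genuinely deterministic, so that the verification off the guessed branch adds no further nondeterminism, is the remaining routine ingredient.
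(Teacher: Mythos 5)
Your overall architecture coincides with the paper's: the decomposition $\compl{L(\aut C)} = L(\autStartsWith{\aut C}{N}) \cup \setInShaped$ from Lemma~\ref{l:disjoint}, the observation that a disjoint union of unambiguous languages is unambiguous, and the unambiguity of $\autStartsWith{\aut C}{N}$ are all exactly as in the paper. The genuine gap is in the direction of the extremal branch for $\setInShaped$: you guess the \emph{left-most} defect branch and assert that ``the same contradiction argument applies'' as in Lemma~\ref{lemma:leftmost}. It does not. In Lemma~\ref{lemma:leftmost} the preference direction (left) coincides with the direction in which a good branch must turn infinitely often (left); that coincidence is precisely what drives the contradiction: if the constructed branch eventually turned only right, it would diverge from a good branch that turns \emph{left} at the divergence point, contradicting the left preference. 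A condition-\ref{def:W_notWF} defect branch must turn \emph{right} infinitely often, so with left preference the contradiction evaporates --- nothing prevents the construction from drifting left forever. This is why the paper tracks the \emph{right-most} incorrect branch, certifying with $\top_S$ that the subtrees abandoned to the \emph{right} are well-shaped and ignoring those abandoned to the left; the existence of the right-most incorrect branch is then the true mirror image of Lemma~\ref{lemma:leftmost}, since the preference direction again agrees with the required infinite turning direction.

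The failure is not hypothetical. Let $t\in\trees A$ be the tree in which \emph{every} node is labelled $\vee$. Then $t\in\setInShaped$ by condition~\ref{def:W_notWF} (take the all-right branch), and the defect branches of $t$ are exactly the branches turning right infinitely often. No left-most one exists: given any defect branch $\sigma$, pick a node $w$ at which $\sigma$ turns right; the branch that turns left at $w$ and then always right is a defect branch strictly to the left of $\sigma$. The left-preferring construction produces the all-left branch, which turns right zero times and is not a defect. Worse, your automaton rejects $t$ altogether: any guessed branch that satisfies the parity condition must turn right infinitely often, hence abandons some left subtree; but every subtree of $t$ is again the all-$\vee$ tree and thus contains defect branches, so your shape-checker (which accepts exactly the well-shaped subtrees) rejects it, and no accepting run exists. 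Your automaton therefore recognises a proper subset of $\setInShaped$, and the union automaton does not recognise $\compl{L(\aut C)}$. Replacing ``left-most'' by ``right-most'' --- preferring the right child whenever a defect passes through it, running the checker on subtrees abandoned to the right, and ignoring those abandoned to the left --- repairs the argument and yields exactly the paper's construction.
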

\begin{proof}
 Recall that, by Lemma \ref{l:disjoint}, $\compl{L(\aut C)}$ is the disjoint union of $\setInShaped$ and $L(\autStartsWith{\aut C}{N})$.
 Since the class of unambiguous languages is closed under disjoint unions, and $\autStartsWith{\aut C}{N}$ is an unambiguous automaton, we only need to show how to recognise incorrectly shaped trees unambiguously. It can be done by finding the rightmost incorrect branch. The proof that the rightmost incorrect branch exists is analogous to the proof of Lemma \ref{lemma:leftmost}.

The automaton uses states $B$, $B_\vee$, $\top_S$, $\top_\vee$, $\top$. States $B$ and $B_\vee$ are used to track the branch and $\top_S$ serves showing non-existence of incorrect branch in a subtree.
State $B$ is of rank $2$; $\top_S$ and $B_\vee$ of rank $1$; the remaining states are of rank $0$. The initial state is $B$. The transitions are as follows:
$$
\begin{array}{cc}
 & \textrm{as in }\aut C \\
\cline{2-2}
\trans{B,B_\vee}{\vee}{\top}{B} \;
\trans{B,B_\vee}{\vee}{B_\vee}{\top_S} \;
&
  \multicolumn{1}{|c|}{
    \trans{\top_\vee,\top_S}{\vee}{\top_\vee}{\top_S} \;
    \trans{\top_S}{\neg}{\top}{\top_S} \;
  } \\
\trans{B}{\neg}{\top}{B} \;
\trans{B_\vee}{-\vee}{\top}{\top}
&
  \multicolumn{1}{|c|}{
    \trans{\top_S}{a,b}{\top}{\top}
  }\\
\cline{2-2}
\end{array}
$$
where $-\vee$ is any label different than $\vee$.

Each run of the automaton has exactly one branch that is labelled (whole or up to some node) with states $B$ and $B_\vee$.
The ranks of the branch fulfill the parity condition if a node with label different than $\vee$ occurs as a left descendant of a $\vee$-node
--- then the branch is as in point $2$ of the definition of set $\setInShaped$; 
or if it turns right infinitely many times and is whole labelled with $\vee$ or $\neg$ --- then it reflects point $1$ of the definition of set $\setInShaped$.
The subtrees diverging to the right from the branch are well shaped --- as in automaton $\aut C$, $\top_S$ guarantees this. 
Therefore, the automaton is unambiguous --- the right-most incorrect branch has to be selected.
\end{proof}

The following theorem summarises the topological results of this paper:
\begin{theorem}
 There is a strongly unambiguous tree language that is not in $\sigma(\add 11)$ class.
\end{theorem}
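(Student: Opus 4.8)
The plan is to take the very language $M = L(\aut C)$ constructed in Section \ref{sec:beyondBC} as the witness; all the required ingredients have already been assembled, so the proof amounts to collecting three previously established facts about this single language.

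First I would record that $M$ is unambiguous: the automaton $\aut C$ recognising it was shown to be unambiguous, its only nondeterministic choice (the two $\vee$-transitions out of $P$ or $P_\vee$ that differ in whether the right child gets $P$ or $N$) being resolved by Lemma \ref{l:disjoint}, since $L(\aut C)$ and $L(\autStartsWith{\aut C}{N})$ are disjoint. Second, I would invoke the Proposition asserting that $\compl M = \compl{L(\aut C)}$ is itself recognised by an unambiguous automaton — obtained there by writing $\compl{L(\aut C)}$ as the disjoint union of $\setInShaped$ and $L(\autStartsWith{\aut C}{N})$, recognising the ill-shaped trees in $\setInShaped$ by tracking the right-most incorrect branch (whose existence is argued exactly as in Lemma \ref{lemma:leftmost}), and appealing to closure of the unambiguous languages under disjoint union. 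These two facts together are precisely the definition of \emph{strong unambiguity} for $M$.

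For the topological lower bound I would then quote the Corollary stating that $L(\aut C) \in \amb 12 \setminus \sigma(\add 11)$; in particular $M \notin \sigma(\add 11)$. Combining strong unambiguity with this non-membership yields the theorem, with $M = L(\aut C)$ as the desired language.

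Because the substantive work is already done upstream, there is no genuine obstacle remaining at this stage. The real difficulty, dispatched earlier, was two-sided: producing a language outside the $\sigma$-algebra $\sigma(\add 11)$ — here Proposition \ref{prop:nonBC} reduces every such boolean combination into $L(\aut C)$ and Proposition \ref{prop:nocomplete} rules out any $\sigma(\add 11)$-complete set, so $L(\aut C)$ must escape the class — while simultaneously keeping both the language and its complement unambiguous. The only point demanding care in phrasing the present proof is to ensure the three cited statements refer to one and the same object, so that the language proven strongly unambiguous is literally the language placed outside $\sigma(\add 11)$; this is immediate, since every cited statement is about $L(\aut C)$.
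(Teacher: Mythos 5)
Your proposal is correct and takes essentially the same route as the paper: the theorem there is stated precisely as a summary of the already-proven facts that $\aut C$ is unambiguous, that $\compl{L(\aut C)}$ is unambiguous (via the partition into $\setInShaped$ and $L(\autStartsWith{\aut C}{N})$ from Lemma \ref{l:disjoint}), and that $L(\aut C)\in\amb 12\setminus\sigma(\add 11)$ by Propositions \ref{prop:nonBC} and \ref{prop:nocomplete}. Your additional care that all three cited facts concern the single language $L(\aut C)$ matches the paper's intent exactly.
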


\section{Related Work}\label{sec:related}
In \cite{finkel_nonborel_reg} the authors consider the difference hierarchy of analytic sets.
They show a sequence of regular tree languages hard for the levels $D_\alpha(\add 11)$ of the hierarchy, for $\alpha<\omega^\omega$.
Then, on page 10, they note that the languages (even the first one in the sequence) are not recognised by unambiguous tree automata.

The base for the construction of Finkel and Simonnet is the set:
$$\{ t\in \trees{\{0,1\}} : \textrm{there is a branch in $t$ with infinitely many labels 1} \}$$

If we look at the construction there, we see that we can use unambiguous set $G$ from this paper as a base, and hardness results still hold.
Actually the proofs remain exactly the same, since they only use $\add 11$-hardness of the basic set.

Now we note that each automaton built during the construction is unambiguous if we use unambiguous automaton $\autForLang{G}$ recognising basic set $G$ and unambiguous automaton $\autForLang{L}$ recognising its complement (i.e. set $L$ from Section \ref{sec:G}) as basic building blocks. 
Indeed, for given tree $t$, the automaton described in the proof of Lemma 4.5 in \cite{finkel_nonborel_reg} always selects the path corresponding to the smallest ordinal $\omega^{n-1}\cdot a_{n-1} + \omega^{n-2}\cdot a_{n-2} + \dots + \omega\cdot a_1 + a_0$ for which the tree $t_{l^{a_{n-1}}rl^{a_{n-2}}r\cdots rl^{a_0}}$ belongs to $G$, and proves that all paths corresponding to less ordinals end up in something that does not belong to $G$ (i.e. belongs to $L$). Unambiguous automaton $\autForLang{G}$ is used to verify belonging to $G$ and $\autForLang{L}$ is used to verify belonging to $L$.

As a result, using example languages $G$ and $L$ from this paper and the construction from the paper \cite{finkel_nonborel_reg} by Finkel and Simonnet, we get a sequence of unambiguous languages hard for the classes $D_\alpha(\add 11)$, for $\alpha<\omega^\omega$. 
We give this fact here only as a note, because all the languages in the sequence reduce to the language from Section \ref{sec:beyondBC}.
This is because each set at any countable level of the difference hierarchy of analytic sets is, by the definition, in $\sigma(\add 11)$.

We add a note by an anonymous reviewer. The game tree language $W_{(0, 2)}$ (one of the languages used in \cite{arnold_altStrict} to show strictness of alternating index hierarchy) does not belong to the class $\sigma(\Sigma_1^1)$, which is an answer to the question asked in the last paragraph of \cite{finkel_nonborel_reg}. Indeed, language $L(C)$ is of index $(0, 2)$ and thus is Wadge\nobreakdash-reducible to language $W_{(0, 2)}$ by the result cited in \cite[Lemma 5.2]{finkel_nonborel_reg}.

\section*{Acknowledgements}
The author wants to thank Micha{\l} Skrzypczak and Henryk Michalewski for their support and advice in some topological proofs.
Many thanks also go to Damian Niwi\'nski and the anonymous reviewers for very valuable comments that have led to a significant improvement of the paper.

\bibliographystyle{eptcs}
\bibliography{bib}

\begin{thebibliography}{10}
\providecommand{\bibitemdeclare}[2]{}
\providecommand{\surnamestart}{}
\providecommand{\surnameend}{}
\providecommand{\urlprefix}{Available at }
\providecommand{\url}[1]{\texttt{#1}}
\providecommand{\href}[2]{\texttt{#2}}
\providecommand{\urlalt}[2]{\href{#1}{#2}}
\providecommand{\doi}[1]{doi:\urlalt{http://dx.doi.org/#1}{#1}}
\providecommand{\bibinfo}[2]{#2}

\bibitemdeclare{article}{arnold_altStrict}
\bibitem{arnold_altStrict}
\bibinfo{author}{Andr{\'e} \surnamestart Arnold\surnameend}
  (\bibinfo{year}{1999}): \emph{\bibinfo{title}{The $\mu$-calculus
  alternation-depth hierarchy is strict on binary trees}}.
\newblock {\sl \bibinfo{journal}{ITA}}
  \bibinfo{volume}{33}(\bibinfo{number}{4/5}), pp. \bibinfo{pages}{329--340}.
\newblock \urlprefix\url{http://dx.doi.org/10.1051/ita:1999121}.

\bibitemdeclare{incollection}{FixedPointTrees_AN92}
\bibitem{FixedPointTrees_AN92}
\bibinfo{author}{Andr{\'e} \surnamestart Arnold\surnameend} \&
  \bibinfo{author}{Damian \surnamestart Niwi\'nski\surnameend}
  (\bibinfo{year}{1992}): \emph{\bibinfo{title}{Fixed point characterization of
  weak monadic logic definable sets of trees}}.
\newblock In: {\sl \bibinfo{booktitle}{Tree Automata and Languages}}, pp.
  \bibinfo{pages}{159--188}.

\bibitemdeclare{misc}{bilkowski_personal2010}
\bibitem{bilkowski_personal2010}
\bibinfo{author}{Marcin \surnamestart Bilkowski\surnameend}
  (\bibinfo{year}{2010}): \bibinfo{howpublished}{personal communication}.

\bibitemdeclare{article}{max_journal}
\bibitem{max_journal}
\bibinfo{author}{Miko{\l}aj \surnamestart Boja\'nczyk\surnameend}
  (\bibinfo{year}{2011}): \emph{\bibinfo{title}{Weak {MSO} with the Unbounding
  Quantifier}}.
\newblock {\sl \bibinfo{journal}{Theory Comput. Syst.}}
  \bibinfo{volume}{48}(\bibinfo{number}{3}), pp. \bibinfo{pages}{554--576}.
\newblock \urlprefix\url{http://dx.doi.org/10.1007/s00224-010-9279-2}.

\bibitemdeclare{inproceedings}{bounds}
\bibitem{bounds}
\bibinfo{author}{Mikolaj \surnamestart Boja\'nczyk\surnameend} \&
  \bibinfo{author}{Thomas \surnamestart Colcombet\surnameend}
  (\bibinfo{year}{2006}): \emph{\bibinfo{title}{Bounds in
  $\omega$-Regularity}}.
\newblock In: {\sl \bibinfo{booktitle}{LICS}}, pp. \bibinfo{pages}{285--296}.
\newblock
  \urlprefix\url{http://doi.ieeecomputersociety.org/10.1109/LICS.2006.17}.

\bibitemdeclare{article}{bradfield_altStrict}
\bibitem{bradfield_altStrict}
\bibinfo{author}{Julian~C. \surnamestart Bradfield\surnameend}
  (\bibinfo{year}{1999}): \emph{\bibinfo{title}{Fixpoint alternation:
  {A}rithmetic, transition systems, and the binary tree}}.
\newblock {\sl \bibinfo{journal}{ITA}}
  \bibinfo{volume}{33}(\bibinfo{number}{4/5}), pp. \bibinfo{pages}{341--356}.
\newblock \urlprefix\url{http://dx.doi.org/10.1051/ita:1999122}.

\bibitemdeclare{inproceedings}{choice_and_order}
\bibitem{choice_and_order}
\bibinfo{author}{Arnaud \surnamestart Carayol\surnameend} \&
  \bibinfo{author}{Christof \surnamestart L{\"o}ding\surnameend}
  (\bibinfo{year}{2007}): \emph{\bibinfo{title}{{MSO} on the Infinite Binary
  Tree: Choice and Order}}.
\newblock In: {\sl \bibinfo{booktitle}{CSL}}, pp. \bibinfo{pages}{161--176}.
\newblock \urlprefix\url{http://dx.doi.org/10.1007/978-3-540-74915-8_15}.

\bibitemdeclare{article}{choice_functions}
\bibitem{choice_functions}
\bibinfo{author}{Arnaud \surnamestart Carayol\surnameend},
  \bibinfo{author}{Christof \surnamestart L{\"o}ding\surnameend},
  \bibinfo{author}{Damian \surnamestart Niwi\'nski\surnameend} \&
  \bibinfo{author}{Igor \surnamestart Walukiewicz\surnameend}
  (\bibinfo{year}{2010}): \emph{\bibinfo{title}{Choice functions and
  well-orderings over the infinite binary tree}}.
\newblock {\sl \bibinfo{journal}{Central European Journal of Mathematics}}
  \bibinfo{volume}{8}, pp. \bibinfo{pages}{662--682}.
\newblock \urlprefix\url{http://dx.doi.org/10.2478/s11533-010-0046-z}.

\bibitemdeclare{inproceedings}{colc_determinism}
\bibitem{colc_determinism}
\bibinfo{author}{Thomas \surnamestart Colcombet\surnameend}
  (\bibinfo{year}{2012}): \emph{\bibinfo{title}{Forms of Determinism for
  Automata (Invited Talk)}}.
\newblock In: {\sl \bibinfo{booktitle}{STACS}}, pp. \bibinfo{pages}{1--23}.
\newblock \urlprefix\url{http://dx.doi.org/10.4230/LIPIcs.STACS.2012.1}.

\bibitemdeclare{article}{finkel_nonborel_reg}
\bibitem{finkel_nonborel_reg}
\bibinfo{author}{Olivier \surnamestart Finkel\surnameend} \&
  \bibinfo{author}{Pierre \surnamestart Simonnet\surnameend}
  (\bibinfo{year}{2009}): \emph{\bibinfo{title}{On Recognizable Tree Languages
  Beyond the {B}orel Hierarchy}}.
\newblock {\sl \bibinfo{journal}{Fundamenta Informaticae}}
  \bibinfo{volume}{95}(\bibinfo{number}{2-3}), pp. \bibinfo{pages}{287--303}.
\newblock \urlprefix\url{http://dx.doi.org/10.3233/FI-2009-151}.

\bibitemdeclare{article}{msou_topol_journal}
\bibitem{msou_topol_journal}
\bibinfo{author}{Szczepan \surnamestart Hummel\surnameend} \&
  \bibinfo{author}{Micha{\l} \surnamestart Skrzypczak\surnameend}
  (\bibinfo{year}{2012}): \emph{\bibinfo{title}{The Topological Complexity of
  {MSO+U} and Related Automata Models}}.
\newblock {\sl \bibinfo{journal}{Fundamenta Informaticae}}
  \bibinfo{volume}{119}(\bibinfo{number}{1}), pp. \bibinfo{pages}{87--111}.

\bibitemdeclare{book}{kechris}
\bibitem{kechris}
\bibinfo{author}{Alexander~S. \surnamestart Kechris\surnameend}
  (\bibinfo{year}{1995}): \emph{\bibinfo{title}{Classical Descriptive Set
  Theory}}.
\newblock {\sl \bibinfo{series}{Graduate Texts in Mathematics}}
  \bibinfo{volume}{156}, \bibinfo{publisher}{Springer-Verlag}.

\bibitemdeclare{book}{moschovakis}
\bibitem{moschovakis}
\bibinfo{author}{Yiannis~N. \surnamestart Moschovakis\surnameend}
  (\bibinfo{year}{2009}): \emph{\bibinfo{title}{Descriptive Set Theory: Second
  Edition}}.
\newblock {\sl \bibinfo{series}{Mathematical Surveys and Monographs}}
  \bibinfo{volume}{155}, \bibinfo{publisher}{American Mathematical Society}.

\bibitemdeclare{inproceedings}{niwinski86}
\bibitem{niwinski86}
\bibinfo{author}{Damian \surnamestart Niwi\'nski\surnameend}
  (\bibinfo{year}{1986}): \emph{\bibinfo{title}{On Fixed-Point Clones (Extended
  Abstract)}}.
\newblock In: {\sl \bibinfo{booktitle}{ICALP}}, pp. \bibinfo{pages}{464--473}.
\newblock \urlprefix\url{http://dx.doi.org/10.1007/3-540-16761-7_96}.

\bibitemdeclare{unpublished}{unambiguous_unpublished}
\bibitem{unambiguous_unpublished}
\bibinfo{author}{Damian \surnamestart Niwi\'nski\surnameend} \&
  \bibinfo{author}{Igor \surnamestart Walukiewicz\surnameend}
  (\bibinfo{year}{1996}): \emph{\bibinfo{title}{Ambiguity problem for automata
  on infinite trees}}.
\newblock \bibinfo{note}{Unpublished note}.

\bibitemdeclare{article}{gap}
\bibitem{gap}
\bibinfo{author}{Damian \surnamestart Niwi\'nski\surnameend} \&
  \bibinfo{author}{Igor \surnamestart Walukiewicz\surnameend}
  (\bibinfo{year}{2003}): \emph{\bibinfo{title}{A gap property of deterministic
  tree languages}}.
\newblock {\sl \bibinfo{journal}{Theor. Comput. Sci.}}
  \bibinfo{volume}{1}(\bibinfo{number}{303}), pp. \bibinfo{pages}{215--231}.
\newblock \urlprefix\url{http://dx.doi.org/10.1016/S0304-3975(02)00452-8}.

\bibitemdeclare{article}{rabin69}
\bibitem{rabin69}
\bibinfo{author}{Michael~O. \surnamestart Rabin\surnameend}
  (\bibinfo{year}{1969}): \emph{\bibinfo{title}{Decidability of Second-Order
  Theories and Automata on Infinite Trees}}.
\newblock {\sl \bibinfo{journal}{Transactions of the AMS}}
  \bibinfo{volume}{141}, pp. \bibinfo{pages}{1--23}.

\bibitemdeclare{article}{rabin70}
\bibitem{rabin70}
\bibinfo{author}{Michael~O. \surnamestart Rabin\surnameend}
  (\bibinfo{year}{1970}): \emph{\bibinfo{title}{Weakly Definable Relations and
  Special Automata}}.
\newblock {\sl \bibinfo{journal}{Mathematical Logic and Foundations of Set
  Theory}}, pp. \bibinfo{pages}{1--23}.

\end{thebibliography}

\end{document}